\newcommand{\hide}[1]{}
\newcommand{\eg}{{\em e.g.}}
\newcommand{\ie}{{\em i.e.}}
\newcommand{\preach}{{\sffamily ProbReach}}
\begin{document}

\title{Probabilistic bounded reachability for hybrid systems with continuous nondeterministic 
and probabilistic parameters}

\author{Fedor Shmarov \and Paolo Zuliani}
\institute{School of Computing Science, Newcastle University, Newcastle upon Tyne, UK \\
\email{\{f.shmarov, paolo.zuliani\}@ncl.ac.uk}}

\maketitle

\begin{abstract}
We develop an algorithm for computing bounded reachability
probability for hybrid systems, i.e., the probability that the system reaches an unsafe
region within a finite number of discrete transitions. In particular, we focus on hybrid
systems with continuous dynamics given by solutions of nonlinear ordinary 
differential equations (with possibly nondeterministic initial conditions and parameters), and 
probabilistic behaviour given by initial parameters distributed as continuous (with 
possibly infinite support) and discrete random variables. 
Our approach is to define an appropriate relaxation of the (undecidable) reachability problem, 
so that it can be solved by $\delta$-complete decision procedures. In particular, for systems 
with continuous random parameters only, we develop a validated integration procedure which 
computes an arbitrarily small interval that is guaranteed to contain the reachability probability. 
In the more general case of systems with both nondeterministic and probabilistic parameters, our 
procedure computes a guaranteed enclosure for the range of reachability probabilities. 
We have applied our approach to a number of nonlinear hybrid models and validated the results 
by comparison with Monte Carlo simulation.


\end{abstract}

\section{Introduction}

Cyber-physical systems integrates digital computing (the {\em cyber} part) with a {\em physical} 
environment or device, in order to enhance or enable new capabilities 
of physical systems. Hybrid systems are mathematical models that combine continuous dynamics 
and discrete control, and enjoy widespread use for modelling cyber-physical systems.
For example, Stateflow/Simulink\footnote{\url{www.mathworks.com/simulink}} is the {\em de facto} 
standard tool for model-based design of embedded systems, and its semantics can be given 
in terms of hybrid systems (\eg, \cite{Tiwari02}). Cyber-physical systems are
used in many safety-critical applications, where a malfunctioning can result in threats to, 
or even loss of, human life. For example, modern aircraft are flown more efficiently 
by a computer, while anti-lock brakes and stability control
contribute to safer cars. Again, electronic biomedical devices (\eg, digital infusion pumps) 
offer superior flexibility and accuracy than traditional devices. Thus, verifying safety of 
cyber-physical systems, and thereby of hybrid systems, is an extremely important problem.

The state space of a hybrid system consists of a discrete component and of a continuous component. 
The fundamental {\em reachability} problem is to decide whether a hybrid system reaches an 
{\em unsafe} region of its state space (a subset of states indicating incorrect behaviour of 
the system). Unfortunately, this problem is undecidable even for hybrid systems with constant 
differential dynamics \cite{DBLP:conf/hybrid/AlurCHH92}. 
For timed automata, \ie, same constant differential dynamics across all the variables, 
the reachability problem is PSPACE-complete \cite{DBLP:conf/icalp/AlurD90}. Also, it has been 
recently shown that bounded-time reachability of rectangular automata with non-negative rates 
is decidable \cite{BrihayeDGORW11}.
However, hybrid systems arising from practical applications feature much richer dynamics, including
non-linear functions over the reals, \eg, trigonometric functions,
for which even simple questions are in general undecidable \cite{Richardson68}.
Furthermore, for many practical applications it is necessary to augment hybrid systems with 
stochastic behaviour. Stochastic systems arise naturally when modelling phenomena which are 
intrinsically probabilistic, \eg, soft errors in computing hardware. Also, stochastic systems 
can arise due to uncertainty in (deterministic) system components, its behaviours, and its 
environment. The reachability problem for stochastic hybrid systems asks what is the {\em probability} 
that the system reaches the unsafe region. (Note that for hybrid systems with both stochastic and 
non-deterministic behaviour the answer may be a range of probabilities.)
In this work we focus on {\em bounded} reachability, \ie, within a finite number of discrete 
transitions.

Since even standard reachability is undecidable, the problem must be modified if we want 
to solve it algorithmically. A possible solution is to relax it in a sound manner through
the notions of $\delta$-satisfiability and $\delta$-complete decision 
procedures \cite{DBLP:conf/lics/GaoAC12}. Such procedures sidestep undecidability by allowing a
`tuneable' precision in the answer provided. This is a necessary condition for decidability, 
and it motivates the notion 
of $\delta$-satisfiability for logical formulae over the reals \cite{DBLP:conf/lics/GaoAC12}. 
Using $\delta$-satisfiability, in this paper we introduce and study the notion of probabilistic 
$\delta$-reachability.

To summarise, in this paper:
\begin{itemize}
	\vspace{-1ex}
	\item we formulate the bounded $\delta$-reachability problem for hybrid systems 
		with continuous/discrete probabilistic and nondeterministic initial parameters;
	\item we develop an algorithm that combines validated integration and $\delta$-complete 
		procedures into for computing a {\em numerically guaranteed} enclosure for the 
		reachability probabilities. For models with continuous random 
		(but no nondeterministic) parameters, such enclosure can be made {\em arbitrarily} small;
	\item we validate our algorithm against standard Monte Carlo probability estimation 
		on a number of case studies.
\end{itemize}

\paragraph{{\bf Related Work.}}
The SiSAT tool \cite{Sisat} solves probabilistic bounded 
reachability by returning answers guaranteed to be numerically accurate.
However, SiSAT does not currently support continuous random parameters, while instead our
tool does so (also with unbounded domains, \eg, normal random variables). A very recent extension 
of SiSAT supports continuous nondeterminism, but the technique is based on statistical model checking 
and therefore can only provide statistical guarantees \cite{SMCnd}, while we give numerical and formal
guarantees.
In \cite{Enszer10verifiedsolution} the authors present a technique for computing p-boxes using validated ODE
integration. However, the technique is restricted to ODE systems and finite-support random parameters, 
while we handle hybrid systems and infinite-support random parameters. Moreover, it is 
not clear what guarantees are given for models containing only continuous and/or discrete random parameters: 
the size of the computed p-box might be quite large. In contrast, for continuous random parameters we can 
compute an arbitrarily small interval containing the exact reachability probability (see Proposition \ref{prop:cn_nd}).

UPPAAL \cite{UPPAAL} is an 
extremely powerful model checker for timed automata, and it has been recently extended to support 
(dynamic) networks of stochastic timed automata via UPPAAL SMC \cite{UPPAALSMCtut}. However,
UPPAAL SMC utilises a statistical model checking approach for reasoning
about probabilities. PRISM \cite{KNP11} is a state-of-the-art model checker for a variety
of discrete-state stochastic systems, but with respect to real-time systems it is limited to
probabilistic timed automata. The tool {FAUST}$^2$ \cite{faust2} utilises abstraction
techniques to verify nondeterministic continuous-state Markov models, although currently
for discrete-time models only. ProHVer computes an upper bound for the maximal reachability 
probability \cite{ZhangSRHH10}, and handles continuous random parameters via discrete overapproximation 
only \cite{FraenzleHHWZ11}. We instead provide an enclosure ({\em both} upper and lower bounds) of the 
whole range of probabilities (for models with nondeterministic continuous parameters); in the case of 
continuous random parameters our enclosure can be arbitrarily tight (see Proposition \ref{prop:cn_nd}).
In \cite{Abate2010624} the authors introduce a technique for computing bounds on reachability probababilities 
for stochastic hybrid systems, using abstraction by discrete-time Markov chains.
The technique is further extended to full LTL and nondeterminism \cite{TkachevA13}.
In \cite{RamponiCSL10} the authors give model checking algorithms for PCTL formulae over continuous-time 
stochastic hybrid systems. However, in \cite{Abate2010624,TkachevA13,RamponiCSL10} continuous state space 
is handled through finite discretisation and approximated numerical solutions are provided for the experiments. 
We instead consider continuous time and space, and give full mathematical/numeric guarantees.

With respect to $\delta$-satisfiability, in \cite{Novak1992} the authors
introduced and studied the complexity of a {\em relaxed} version of the verification problem, \ie, verifying whether
a given candidate is close to a problem solution. The (strong) verification problem is 
undecidable in general, so the authors relax it by introducing a ``safety zone'' in
which either answer is deemed correct --- this is $\delta$-satisfiability. 

Finally, in our work we use verified integration techniques (for an overview see, \eg, 
\cite{Petras:VerifiedNI} and references therein). Integration methods in the literature work 
with integrands in explicit form, \ie, one must provide the actual mathematical expression 
for the integrand. Our approach is more general because:
a) the integrand is given as a function of the numerical solution of possibly nonlinear ODEs; 
b) it considers hybrid dynamics.  Our algorithm carries over the guarantees provided by 
$\delta$-complete procedures for aspects a) and b) to the verified computation of a 
multi-dimensional integral over a possibly unbounded domain.

\section{Probabilistic Bounded $\delta$-Reachability}

The following definition of hybrid system is a slight variant of the standard one.
\begin{definition}\label{def:HS}
A hybrid system with {\em probabilistic} and {\em nondeterministic} initial parameters 
consists of the following components:
\begin{itemize}
\item $Q = \{q_0, ..., q_m\}$ a set of modes (discrete components of the system),
\item $D = D_0 \times \ldots \times D_p $ a domain of {\em discrete} random parameters, where 
	each $D_i$ is a finite set of reals,
\item $R = [r_1, s_1] \times \ldots \times [r_l, s_l] \subset \mathbb{R}^{l}$ a domain of 
	{\em continuous} random parameters,
\item $Z = [y_1, z_1] \times \ldots \times [y_o, z_o] \subset \mathbb{R}^{o}$ a domain of 
	{\em nondeterministic} parameters,
\item $X = [u_1, v_1] \times \ldots \times [u_n, v_n] \times [0, T] \subset \mathbb{R}^{n+1}$ a domain 
	of continuous variables,

\item $S = Q \times X$ is the {\em hybrid state space} of the system,
\item $\Lambda = D \times R \times Z$ is the {\em parameter space} of the system,
\item $U \subseteq S$ an {\em unsafe} region of the state space,
\end{itemize}
and predicates (or relations)
\begin{itemize}
	\item $\text{flow}_{q}(\lambda, \textbf{x}^{0}, \textbf{x}^{t})$ 
	mapping the parameter $\lambda \in \Lambda$ and the continuous state 
	$\textbf{x}^{0}$ at time 0 to state $\textbf{x}^{t}$ at time point $t \in [0, T]$ in mode $q$
	\item $\text{init}_q(\textbf{x}^{0})$ indicating that $s = (q, \textbf{x}^{0})$ belongs 
	to the set of initial states,
	\item $\text{jump}_{q \rightarrow q'}(\lambda, \textbf{x}^{t}, \textbf{x}^{0})$ indicating 
	that the system with parameter $\lambda\mathord{\in}\Lambda$ can make a transition from 
	mode $q$, upon reaching 
	the jump condition in continuous state $\textbf{x}^{t}$ at time $t \in [0, T]$, 
	to mode $q^\prime$ and setting the continuous state to $\textbf{x}^{0}$,
	\item $\text{unsafe}_{q}(\textbf{x}^{t})$ indicating that $s = (q, \textbf{x}^{t}) \in U$
\end{itemize}
For all $q\in Q$ the sets defined by $\text{flow}_q, \text{init}_q, \text{jump}_q$,
and $\text{unsafe}_q$ are Borel; $\text{flow}_q$ and $\text{jump}_q$ are restricted 
to be functions of $(\lambda,\textbf{x}^0)$ and $(\lambda,\textbf{x}^t)$, respectively.
\end{definition}
The parameters in $\Lambda$ are assigned in the initial mode and remain 
unchanged throughout the system's evolution. Also, the
Borel assumption for the sets defined by the predicates is a theoretical requirement for well-definedness
of probabilities, and in practice it is easily satisfied. 
The continuous dynamics of the system is defined in each flow, and it can either be presented 
as a system of Lipschitz-continuous ODEs or explicitly. In this paper we focus on hybrid 
systems for which in each mode only one jump is allowed to take place (of course the model may have 
multiple jumps, but only one jump should be enabled at any time).
Given an initial value of the parameters, the semantics of a hybrid system can be informally 
thought as piece-wise continuous.
(More details about the formal semantics can be found in \cite{DBLP:conf/hybrid/AlurCHH92}.)

Bounded reachability asks whether the system reaches the unsafe region after $k \in \mathbb{N}$ 
discrete transitions.
\begin{definition}\cite{DBLP:journals/corr/GaoKCC14}\label{def:probreach}
The bounded $k$-step reachability property for hybrid systems with initial parameters
is the bounded $\Sigma_1$ sentence $ \exists \lambda\in\Lambda\  \psi (\lambda)$,
where
\begin{equation} \label{eq:reachability}
\begin{split}
\psi (\lambda) = \exists \textbf{x}^{0}_{0,q_0}, \exists \textbf{x}^t_{0,q_0}, ... , \exists \textbf{x}^{0}_{0,q_m}, \exists \textbf{x}^t_{0,q_m} , ... , \exists \textbf{x}^{0}_{k,q_m}, \exists \textbf{x}^t_{k,q_m}:\\
(\bigvee_{q \in Q} (\text{init}_{q}(\textbf{x}^{0}_{0,q}) \wedge \text{flow}_{q}(\lambda, \textbf{x}^{0}_{0,q},\textbf{x}^t_{0,q}))) \\
\wedge (\bigwedge_{i=0}^{k-1} (\bigvee_{q,q' \in Q}(\text{jump}_{q \rightarrow q'}(\lambda, \textbf{x}^t_{i,q}, \textbf{x}^{0}_{i+1,q'}) \\
\wedge (\text{flow}_{q'}(\lambda, \textbf{x}^{0}_{i+1,q'}, \textbf{x}^t_{i+1,q'}))) \wedge (\bigvee_{q \in Q} \text{unsafe}_{q}(\textbf{x}^t_{k,q}))))
\end{split}
\end{equation}
\end{definition}
\hide{
\begin{definition} \cite{DBLP:journals/corr/GaoKCC14}\label{def:reachability}
Bounded $k$-step $\delta$-reachability in hybrid systems can be encoded as a bounded $\Sigma_1$-sentence
\begin{equation} \label{eq:reachability}
\begin{split}
\exists \textbf{x}^{0}_{0,q_0}, \exists \textbf{x}^{t}_{0,q_0}, ... , \exists \textbf{x}^{0}_{0,q_m}, \exists \textbf{x}^{t}_{0,q_m} , ... , \exists \textbf{x}^{0}_{k,q_m}, \exists \textbf{x}^{t}_{k,q_m}:\\
(\bigvee_{q \in Q} (\text{init}_{q}(\textbf{x}^{0}_{0,q}) \wedge \text{flow}_{q}(\textbf{x}^{0}_{0,q},\textbf{x}^{t}_{0,q}))) \\
\wedge (\bigwedge_{i=0}^{k-1} (\bigvee_{q,q' \in Q}(\text{jump}_{q \rightarrow q'}(\textbf{x}^{t}_{i,q}, \textbf{x}^{0}_{i+1,q'}) \\
\wedge (\text{flow}_{q'}(\textbf{x}^{0}_{i+1,q'}, \textbf{x}^{t}_{i+1,q'}))) \wedge (\bigvee_{q \in Q} \text{unsafe}_{q}(\textbf{x}^{t}_{k,q}))))
\end{split}
\end{equation}
where $\textbf{x}^{0}_{i,q}$ and $\textbf{x}^{t}_{i,q}$ represent the continuous state at time point $t$ in the mode $q$ at the depth $i$, and $q'$ is a successor mode.
\end{definition}
}
\hide{
\begin{remark}
In this paper we are using a shorter version $\exists \textbf{x}^{t}_{i,q_j}$ of a longer notation $\exists i \in [0, k], \exists j \in [0, m], \exists q_{j} \in Q, \exists t \in [0, T], \exists \textbf{x}_{i,q_j}(t)$.
\end{remark}
}
Informally, the formula $\exists \lambda \in \Lambda\;\;  \psi(\lambda)$ encodes the sentence 
``there exists a parameter vector for which starting from {\em init} and following {\em flow} and {\em jump},
the system reaches the unsafe region in $k$ steps''.
We obtain reachability {\em within} $k$ steps by forming a disjunction of formula (\ref{eq:reachability})
for all values from 1 to $k$.
The bounded reachability problem can be solved using a $\delta$-complete decision 
procedure \cite{DBLP:conf/lics/GaoAC12}, which will {\em correctly} return one of the following answers: 
\begin{itemize}
	\item \textbf{unsat}: meaning that formula (\ref{eq:reachability}) is unsatisfiable 
	(the system never reaches the bad region $U$);
	\item \textbf{$\delta$-sat}: meaning that formula (\ref{eq:reachability}) is 
	$\delta$-satisfiable. In this case a witness, \ie, an assignment for all 
	the variables, is also returned.
\end{itemize}
With a $\delta$-complete decision procedure, an \textbf{unsat} answer can always be trusted, 
while a \textbf{$\delta$-sat} answer might in fact be a false alarm caused by the overapproximation.
(In Appendix \ref{apndx:delta-sat} we provide a short overview of $\delta$-satisfiability.)

We now associate a probability measure to the random parameters, and we consider the following problem: 
what is the probability that a hybrid system with initial parameters reaches the unsafe region in $k$ steps?
Note that hybrid systems with both random and nondeterministic parameters will 
feature a range of reachability probabilities (although not necessarily a full interval). 
\begin{definition}\label{def:probboundreach}
The {\em probabilistic bounded $k$-step reachability problem} for hybrid system with initial parameters is
to compute an interval $[a,b]$ such that:
\begin{equation}\label{eq:probabilistic-reachability}
	\forall z_0\in Z \quad \int_{B|_{z_0}} \,dP \  \in \  [a,b]
\end{equation}
where 
\begin{equation} \label{def:Borel-set}
B = \{\lambda\in \Lambda : \psi(\lambda)\}
\end{equation}
and formula $\psi(\lambda)$ is per Definition \ref{def:probreach}; $P$ is the probability measure 
associated with the random parameters; and $B|_{z_0}$ is the restriction of $B$ to $z_0$.
\end{definition}
Informally, $B$ is the set of the parameter values for which the system reaches the unsafe region 
in $k$ steps. 
\begin{proposition}\label{prop:BBorel}
The set $B$ defined by (\ref{def:Borel-set}) is Borel.
\end{proposition}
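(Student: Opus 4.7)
The plan is to exploit the finite discrete structure of the hybrid system to reduce $B$ to a finite union of projections of Borel sets, then handle these projections using the functional restrictions on flow and jump.

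First, using the finiteness of $Q$, I would rewrite $\psi(\lambda)$ in \eqref{eq:reachability} as a finite disjunction over mode sequences $\mathbf{q} = (q_0, \ldots, q_k) \in Q^{k+1}$,
\[ \psi(\lambda) \iff \bigvee_{\mathbf{q} \in Q^{k+1}} \psi_{\mathbf{q}}(\lambda), \]
where $\psi_{\mathbf{q}}$ asserts the existence of a trajectory visiting the modes $q_0, q_1, \ldots, q_k$ in order, starting in $\text{init}_{q_0}$ and ending in $\text{unsafe}_{q_k}$. Stripping the outer existentials, the body of each $\psi_{\mathbf{q}}$ is a finite conjunction of the predicates $\text{init}_{q_0}$, $\text{flow}_{q_i}$, $\text{jump}_{q_i \to q_{i+1}}$, and $\text{unsafe}_{q_k}$; by the standing hypothesis in Definition \ref{def:HS} each defines a Borel set, so their conjunction defines a Borel set $A_{\mathbf{q}} \subseteq \Lambda \times X^{2(k+1)}$, and $B_{\mathbf{q}} := \{\lambda : \psi_{\mathbf{q}}(\lambda)\} = \pi_{\Lambda}(A_{\mathbf{q}})$. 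Since finite unions of Borel sets are Borel, it will suffice to show that each $B_{\mathbf{q}}$ is Borel.

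Second, and this is where I expect the main obstacle, projections of Borel sets in Polish spaces are in general only guaranteed to be analytic, not Borel. I would overcome this by invoking the functional restriction on $\text{flow}_q$ and $\text{jump}_{q \to q'}$ from Definition \ref{def:HS} together with the paper's assumption that at most one jump is enabled at any time: combined, these imply that for each fixed $(\lambda, \mathbf{x}^{0}_{0,q_0})$ the remaining coordinates $(\mathbf{x}^{t}_{0,q_0}, \mathbf{x}^{0}_{1,q_1}, \ldots, \mathbf{x}^{t}_{k,q_k})$ are uniquely determined as the image of a Borel-measurable composition of the flow and jump maps. Hence $A_{\mathbf{q}}$ reduces to the graph of this Borel composition over the Borel subset $\text{init}_{q_0} \subseteq X$, and the remaining $\lambda$-projection has sections sitting inside the compact space $X$. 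A Kunugui-type theorem on projections of Borel sets with $\sigma$-compact sections then yields that each $B_{\mathbf{q}}$ is Borel; taking the finite union over $\mathbf{q}$ completes the proof.
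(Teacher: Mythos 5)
Your route is genuinely different from the paper's. The paper's own proof is a one-line observation that the predicates $\text{init}_q,\text{flow}_q,\text{jump}_{q\to q'},\text{unsafe}_q$ define Borel sets and that conjunction and disjunction correspond to intersection and union; it never addresses the block of existential quantifiers in $\psi$, i.e.\ the projection onto $\Lambda$. You correctly identify that this projection is the crux (projections of Borel sets in Polish spaces are in general only analytic), and your use of the functional restriction on $\text{flow}_q$ and $\text{jump}_{q\to q'}$ from Definition \ref{def:HS} to collapse the whole chain of existentials down to a single projection over the initial continuous state $\textbf{x}^0_{0,q_0}$ is exactly the right reduction: projecting a graph onto its domain coordinates costs nothing, so only the quantifier over $\textbf{x}^0_{0,q_0}$ survives. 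Up to that point your argument is more careful than the paper's.

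The gap is in your final step. After the reduction you have
$B_{\mathbf{q}} = \pi_\Lambda\bigl(\{(\lambda,\textbf{x}^0) : \textbf{x}^0\in \text{init}_{q_0} \wedge G_{\mathbf{q}}(\lambda,\textbf{x}^0)\in \text{unsafe}_{q_k}\}\bigr)$
for a Borel map $G_{\mathbf{q}}$, and the $\lambda$-sections of the set being projected are Borel subsets of the compact cube $X$. But \emph{contained in a compact space} is not the same as \emph{$\sigma$-compact}: a Borel subset of $[0,1]$ (e.g.\ the irrationals) need not be $\sigma$-compact, so the Arsenin--Kunugui theorem does not apply as you invoke it, and with $\text{init}_{q_0}$ and $\text{unsafe}_{q_k}$ merely Borel the set $B_{\mathbf{q}}$ is in general only analytic. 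To close the gap you need either extra hypotheses --- a single initial continuous state as in Definition \ref{def:probboundreach_rv}, in which case no projection remains and $B_{\mathbf{q}}$ is the Borel preimage of a Borel set, or closed (hence, in $X$, compact) init and unsafe sets, which make the sections $\sigma$-compact and let the Kunugui-type theorem through --- or you must weaken the conclusion to analyticity, which still yields universal measurability and hence a well-defined reachability probability, which is all that Definition \ref{def:probboundreach} actually requires.
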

(Proofs can be found in Appendix \ref{apndx:proofs}.)
The proposition entails that for any choice of the nondeterministic parameters, the probability that 
the system reaches the unsafe region is well-defined, and thereby Definition \ref{def:probboundreach} is
well-posed. When consider hybrid systems with continuous random parameters only, 
Definition \ref{def:probboundreach} can be strengthened.
\begin{definition}\label{def:probboundreach_rv}
Given any $\epsilon \in \mathbb{Q} \cap (0,1]$, the {\em probabilistic bounded $k$-step reachability problem}
for hybrid systems with {\em random continuous} initial parameters and single initial state is to compute 
an interval $[a,b]$ of length up to $\epsilon$ such that:
\begin{equation}\label{eq:probabilistic-reachability_rv}
	\int_{B} \,dP \  \in \  [a,b]
\end{equation}
where 
\begin{equation} 
B = \{\lambda\in \Lambda : \psi(\lambda)\}
\end{equation}
and formula $\psi(\lambda)$ is per Definition \ref{def:probreach}; $P$ is the probability measure 
associated with the random parameters.
\end{definition}
Note that if only discrete random parameters are present it might not be possible to obtain 
an arbitrarily small enclosure.
Also, in Definition \ref{def:HS} we require all continuous domains to be bounded: 
this is a necessary condition for $\delta$-decidability of bounded reachability \cite{DBLP:conf/lics/GaoAC12}. 
However, we later show that it is still possible to reason about random parameters with 
unbounded domains, \eg, normally distributed. The key is that any probability density function
can be approximated arbitrarily well by a truncation on a large (but finite) interval.




\section{Validated Integration Procedure}\label{sec:vip}
We now present the first part of our $\delta$-complete procedure for calculating the $k$-step
reachability probability (\ref{eq:probabilistic-reachability_rv}).
The algorithm consists of a validated integration procedure and a decision procedure used 
for computing the set $B$ of Definition \ref{def:probboundreach_rv}. For clarity, we focus on one 
random continuous initial parameter.

\noindent{\em Notation.}
For an interval $[r]=[\underline{r}, \overline{r}] \subset \mathbb{R}$ we denote the size of
the interval by $width([r]) = \overline{r} - \underline{r}$ and by                         
$mid([r]) = \frac{\overline{r} + \underline{r}}{2}$ the central point of the interval.

Our validated integration procedure employs the (1/3) Simpson rule:
\begin{equation}\label{eq:Simpson}
\begin{split}
K([I])= \int_{a}^{b} \, f(x)\, dx\, = \frac{width([I])}{6}(f(\underline{I}) + 4 f(mid([I])) + \\
f(\overline{I})) - \frac{width([I])^5}{2880}f^{(4)}(\xi)
\end{split}
\end{equation}
where $[I]=[a,b]$, $\xi \in [I]$ and $f^{(4)}$ is the fourth derivative of an integrable 
function $f$. For our applications the integrands are probability density functions, which
satisfy the required integrability and differentiability conditions.
Our aim is to compute an interval of arbitrary size $\epsilon \in (0, 1]\cap \mathbb{Q}$ 
that contains $K$.
 
\begin{definition}
An interval extension of function $f:X \rightarrow Y$ is an operator $[\cdot]$ such that:
\begin{equation*}
	\forall x \in [r] \subseteq X: f(x) \in [f]([r]) \subseteq Y
\end{equation*}
\end{definition} 
By applying interval arithmetics, one computes interval extensions of $f$ and $f^{(4)}$.
(Interval extensions can be computed using interval arithmetics libraries, \eg, FILIB++ \cite{filib}.)
The interval version of Simpson's rule can be obtained simply by replacing in (\ref{eq:Simpson})
the occurrences of $f$ and $f^{(4)}$ with their interval extension $[f]$, and by replacing $\xi$ 
with the entire interval $I$ \cite{ValidatedIntegration}:
\[
\begin{split}
K \in [K]([I]) = \frac{width([I])}{6}([f](\underline{I}) + 4 [f](mid([I])) + 
[f](\overline{I})) - \frac{width([I])^5}{2880}[f]^{(4)}([I]).
\end{split}
\]
Furthermore, by the definition of integral:
\begin{equation}\label{eq:K}
K \in \Sigma_{i = 1}^{n} [K]([x]_{i})
\end{equation}
where the collection of $[x]_{i}$'s is a partition of $[a, b]$. Note that we require a partition in a 
measure-theoretic sense, \ie, intersections have (Lebesgue) measure 0, since these have no effect on integration.

In order to guarantee $\epsilon$-completeness of the integration it is sufficient to partition 
$[a, b]$ into $n$ intervals $[x]_{i}$ such that for each $[x]_{i}$ we have
$width([I]([x]_{i})) < \epsilon \frac{width([x]_{i})}{b - a}$. Then, the exact value $K$ of the 
integral will belong to an interval (\ref{eq:K}) of width smaller than $\epsilon$. Pseudo-code 
for the procedure computing integral (\ref{eq:Simpson}) up to an arbitrary 
$\epsilon \in (0,1]\cap\mathbb{Q}$ is given in Algorithm \ref{alg:integration}. For our
purposes we will only make use of the interval partition $T$, which will enable us 
to compute the reachability probability, \ie, integral (\ref{eq:probabilistic-reachability_rv}),
with precision $\epsilon$.
\begin{proposition}\label{prop:IntegrationComplexity}
If $f \in P_{C^5[a,b]}$, then the complexity of Algorithm \ref{alg:integration} is {\em NP}.
\end{proposition}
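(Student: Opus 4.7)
My plan is to combine two polynomial bounds: one on the number of cells produced by the adaptive loop, and one on the arithmetic work per cell. Since $f\in P_{C^5[a,b]}$, the fourth derivative $f^{(4)}$ is continuous on the compact interval $[a,b]$ and so uniformly bounded by some constant $M$; because $f^{(4)}\in C^1$, it is also Lipschitz on $[a,b]$, say with constant $L$. The classical Simpson remainder in~(\ref{eq:Simpson}) bounds the exact integration error on a subinterval of width $w$ by $Mw^5/2880$. Passing to the interval-arithmetic version $[K]([x]_i)$, the only non-negligible contribution to its width comes from $\frac{w^5}{2880}[f]^{(4)}([x]_i)$, whose width equals $\frac{w^5}{2880}\cdot width([f]^{(4)}([x]_i))$. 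Lipschitz continuity of $f^{(4)}$ together with the standard tightness of natural interval extensions on small domains yields $width([f]^{(4)}([x]_i))=O(w)$, so $width([K]([x]_i))=O(w^5)$ as $w\to 0$.

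The termination condition $width([K]([x]_i))<\epsilon\cdot width([x]_i)/(b-a)$ is therefore met as soon as $w$ drops below a constant multiple of a fractional power of $\epsilon$ (determined by $M$, $L$, $b-a$). Consequently the partition $T$ produced by the algorithm has at most $n=O(\epsilon^{-1/4})$ cells, a quantity polynomial in $\epsilon^{-1}$. For each cell, the algorithm performs a constant number of interval evaluations of $[f]$ and $[f]^{(4)}$ at the three Simpson nodes together with a width test; since $f\in P_{C^5[a,b]}$, each such evaluation is polynomial-time computable in the bit-length of its operands, which is itself $O(\log(1/\epsilon))$. Summing over all cells, the aggregate runtime is polynomial in $n$ and in $\log(1/\epsilon)$, and hence polynomial in $\epsilon^{-1}$.

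To conclude membership in \textbf{NP}, $T$ itself is exhibited as a polynomial-size witness: given $T$, a deterministic verifier recomputes $\sum_i[K]([x]_i)$, checks the width condition on each cell, and confirms that the final total width does not exceed $\epsilon$, all in polynomial time in $|T|$. The principal technical obstacle is the interval-arithmetic overestimation estimate $width([f]^{(4)}([x]_i))=O(w)$; this is exactly where the $C^5$ hypothesis (rather than merely $C^4$) is invoked, via Lipschitz continuity of $f^{(4)}$, ensuring that the $w^5$ decay of the true Simpson error survives into the interval version instead of being dominated by a slower overestimation term.
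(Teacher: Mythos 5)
Your core argument for membership in NP --- the partition $T$ serves as the certificate, and a deterministic verifier checks the per-cell width condition and re-sums the interval Simpson terms in time polynomial in the size of $T$ --- is exactly the paper's argument, so the proposal is correct in its essential structure. Two points of divergence are worth flagging. First, the paper's entire proof rests on the one nontrivial fact that you pass over in a subordinate clause: that $f\in P_{C^5[a,b]}$ implies $f^{(4)}$ is polynomial-time computable. This is \emph{not} immediate from the definition of $P_{C^5}$ (which only asserts that $f$ itself is polynomial-time computable and that $f^{(5)}$ exists and is continuous); it is Ko's Corollary 6.3, which the paper cites explicitly. Without it, the verifier cannot evaluate $[f]^{(4)}$ efficiently and the NP claim collapses, so you should cite or prove this rather than treat it as definitional. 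Second, your quantitative bound $n=O(\epsilon^{-1/4})$ on the number of cells is extra machinery the paper does not use (and it is where you spend the $C^5$ hypothesis, on Lipschitz continuity of $f^{(4)}$, rather than on derivative computability). It is a nice observation about the adaptive algorithm's behaviour, but note that $\epsilon^{-1/4}$ is exponential in the bit-length of $\epsilon$, so calling $T$ a ``polynomial-size witness'' is only accurate if size is measured against $\epsilon^{-1}$ (or the partition itself), which is the convention the paper implicitly adopts; the bound therefore neither strengthens nor is needed for the NP statement as the paper intends it.
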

\begin{algorithm}
\label{alg:integration}
input: function $f$, interval $[a, b]$, $\epsilon \in \mathbb{Q}^+$\;
output: $[I]$, partition $T$ of $[a, b]$ such that $\int_a^b f \in [I]$ 
	and $width([I]) \leq \epsilon$\;
$[I] = [0.0, 0.0]$\;
$T,B = \emptyset$\;
\tcp{put initial partition on a stack}
$B.push(\{[a, b], [K]([a,b])\})$\;	
\While{$size(B) > 0$}{
	$\{[x], [y]\} = B.pop()$\;
	\If{$width([y]) > \epsilon \frac{width([x])}{b - a}$}{
		\tcp{split the interval in two}
		$B.push(\{[\underline{x}, mid([x])], [K]([\underline{x}, mid([x])])\})$\;
		$B.push(\{[mid([x]), \overline{x}], [K]([mid([x]), \overline{x}])\})$\;
	}\Else{
		\tcp{add sub-integral to the partial sum; save interval}
		$[I] = [I] + [K]([x])$\;
		$T.push([x])$\;
	}
}
\Return $T$, $[I]$\;
\caption{Validated Integration Procedure}
\end{algorithm}

\hide{
We recall that $P_{C^n[a,b]}$ denotes the class of polynomial-time (Type 2) computable functions
whose derivative $f^{(n)}$ exists and is continuous over $[a,b]$. Ko \cite[Section 6.2]{kobook}
showed that if $f$ is also analytic, then integration becomes P. However, such an algorithm
essentially uses truncated Taylor series over an arbitrary partition. Instead, 
Algorithm \ref{alg:integration} adaptively searches for a partition that guarantees the
required error bound $\epsilon$, while having a minimal number of intervals. In practice, this 
significantly benefits the performance of our whole implementation. Another advantage of 
Algorithm \ref{alg:integration} is that it does not require $f$ to be analytic.
}



\section{Computing $\delta$-Reachability Probability}

\subsection{Computing indicator functions}
From Algorithm \ref{alg:integration} we obtain a partition of the domain of the random
parameters which will guarantee the computation of integral (\ref{eq:probabilistic-reachability_rv})
with the desidered accuracy. In general, given $z_0\in Z$, the reachability probability is computed 
by integrating the probability measure of the random parameters over the restriction $B|_{z_0}$. 
We need to compute the following integral
\[
\int_{B|_{z_0}} \,dP(r)\quad  \left(= \int_{D\times R} 
        I_{U}(r, z_0)dP(r)\right)
\]
where $B$ is the set (\ref{def:Borel-set}), $z_0\in Z$, and $I_{U}$ is the indicator function
\begin{equation*}
        I_{U}(r, z_0) = 
        \begin{cases}
                1 \quad \text{if the system with parameter $(r, z_0)$ reaches
                        $U$ in $k$ steps} \\
                0  \quad\text{otherwise.}
        \end{cases}
\end{equation*}
We now show how to compute $I_U$ or, equivalently, set $B$.
Let $[\rho]\subseteq \Lambda$ be a box and $\phi$ be a formula of the form:
\begin{equation} \label{eq:phi}
\phi([\rho]) = \exists \lambda \in [\rho]\ \psi(\lambda)
\end{equation}
If the formula is true then $[\rho]$ contains a value for the initial parameters for which
the system reaches the unsafe region $U$.
Taking the complement of the unsafe region $U^{C} = S / U$ ($S$ is the state space of the 
system) and defining a predicate $\text{unsafe}^{C}_{q}(\textbf{x}^{t}) \equiv 
((q, \textbf{x}^{t}_{q}) \in U^{C})$ we want to ensure that the system never reaches the 
unsafe region {\em within} the $k$-th step with an initial parameter from $[\rho]$. 
In order to conclude that it is sufficient to evaluate the formula:
\begin{equation} \label{eq:phi_C}
\begin{split}
\phi^{C}([\rho]) = 
	\exists \lambda \in [\rho], \exists \textbf{x}^{0}_{0,q_0}, \exists \textbf{x}^{t}_{0,q_0}, \exists t_{0, q_0} , ... , \exists \textbf{x}^{0}_{0,q_m}, \\
\exists \textbf{x}^{t}_{0,q_m}, \exists t_{0, q_m}, ... , \exists \textbf{x}^{0}_{k,q_m}, \exists \textbf{x}^t_{k,q_m}, \exists t_{k, q_m}, \forall t'_{k, q_m} \in [0, t_{k, q_m}]:\\
(\bigvee_{q \in Q} (\text{init}_{q}(\textbf{x}^{0}_{0,q}) \wedge \text{flow}_{q}(\lambda, \textbf{x}^{0}_{0,q}, \textbf{x}^{t}_{0,q}))) \wedge \\
(\bigwedge_{i=0}^{k-1} (\bigvee_{q,q' \in Q}(\text{jump}_{q \rightarrow q'}(\lambda, \textbf{x}^{t}_{i,q}, \textbf{x}^{0}_{i+1,q'}) 
\wedge \\
(\text{flow}_{q'}(\lambda, \textbf{x}^{0}_{i+1,q'}, \textbf{x}^{t}_{i+1,q'}))) \wedge \\
(\bigvee_{q \in Q} (\text{unsafe}^{C}_{q}(\textbf{x}^{t}_{k,q}) \wedge (\text{jump}_{q \rightarrow q'}(\lambda, \textbf{x}^{t}_{k,q}, \textbf{x}^{0}_{k+1,q'}) \vee 
(t_{k, q_m} \ge T))))))
\end{split}
\end{equation}
Note that $\phi^C$ is not the logical negation of $\phi$ --- it is in fact an $\exists\forall$-quantified
formula.
The last term of $\phi^C$ ensures that the system either does not reach the unsafe region on 
the $k$-th step before it can make a transition to the successor mode or it reaches the time bound before reaching the unsafe region. This should not be confused with reaching the time bound in any of the preceding modes as it means that the system fails to reach the $k$-th step and should be, therefore, unsatisfiable.
If the formula evaluates to true then the system does not reach the unsafe region on the $k$-th step.
Then, set $B$ can be defined as a finite collection
$\{[\rho]_{i} : \phi([\rho]_{i}) \wedge (\neg \phi^{C}([\rho]_{i})\}$.
\hide{
By integrating a probability density function of a random variable we obtain intervals 
such that size of the enclosure containing the exact value of a partial sum for each 
of them is smaller than a local error.
Then a decision about the relation between each of the obtained intervals and the (unknown) 
Borel set $B$ should be taken. In order to achieve this we use a $\delta$-complete decision 
procedure, \ie, dReal \cite{DBLP:conf/cade/GaoKC13}, to verify formulas (\ref{eq:phi}) and 
(\ref{eq:phi_C}). 
}
To build such a collection, we iteratively evaluate $\phi$ and $\phi^C$ with a $\delta$-complete
procedure (\eg, dReal \cite{DBLP:conf/cade/GaoKC13}). Given a box $[\rho]$, there are four 
possible outcomes:
\begin{itemize}
	\item{$\phi([\rho])$ is \textbf{unsat}. Hence, there are {\em for sure} no values in $[\rho]$ 
	such that the system reaches the unsafe region, so $[\rho]$ is not in $B$.}
	\item{$\phi([\rho])$ is \textbf{$\delta$-sat}. Then, there is a value in $[\rho]$ such that 
	the system reaches $U$ or $U^{\delta}$ ($\delta$-weakening of set $U$).}
	\item{$\phi^C([\rho])$ is \textbf{unsat}. Therefore, there is {\em for sure} no value in 
	$[\rho]$ such that for all time points on the $k$-th step the system stays in $U^C$. 
	In other words, for all the values in $[\rho]$ the system 
	reaches $U$, so $[\rho]$ is fully contained in $B$.}
	\item{$\phi^C([\rho])$ is \textbf{$\delta$-sat}. Then there is a value in $[\rho]$ such that the system stays within $U^{C}$ or $U^{C^{\delta}}$. In combination with outcome \textbf{$\delta$-sat} for $\phi([\rho])$ it signals that $[\rho]$ is a {\em mixed} interval (it contains values from both 
$B$ and $B^C$).}
\end{itemize}
Therefore, \textbf{unsat} answers enable us to decide whether $[\rho]$ is a subset of or disjoint from
set $B$. If \textbf{$\delta$-sat} is returned for both formulae, then we are either dealing with a 
false alarm (an unsatisfiable formula is verified as \textbf{$\delta$-sat} because of the
overapproximation) or a mixed interval.

\vspace{-2ex}
\subsection{Main algorithm}
\vspace{-1ex}
The overapproximation (controlled by $\delta$) introduced by $\delta$-complete procedures can 
cause false alarms. We thus begin by addressing the choice of $\delta$.
Obviously, it is impossible to decide {\em correctly} (\ie, obtaining \textbf{unsat} for one of 
$\phi$ and $\phi^C$) on each interval if a fixed $\delta$ (even a very small one) is used for
evaluate all formulae. 


\begin{lemma} \label{lemma:converge}
Let $\phi$ be an arbitrary bounded $\Sigma_1$ formula and $\phi^{\delta}$ its weakening. Then the following holds:
\begin{equation*}
\forall \delta, \delta' \in \mathbb{Q}^{+}, 0 \leq \delta' < \delta: \neg \phi^{\delta} \rightarrow \neg \phi^{\delta'} \rightarrow  \neg \phi
\end{equation*}
\end{lemma}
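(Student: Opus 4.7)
The plan is to reduce the chained implication to two monotonicity facts about $\delta$-weakening: (a) $\phi \Rightarrow \phi^{\delta'}$ for every $\delta' \geq 0$, and (b) $\phi^{\delta'} \Rightarrow \phi^{\delta}$ whenever $0 \leq \delta' \leq \delta$. Once both are established, taking contrapositives gives $\neg \phi^{\delta} \Rightarrow \neg \phi^{\delta'} \Rightarrow \neg \phi$, which is exactly the statement of the lemma (parsed with the standard right-associative reading of $\to$).

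First I would recall the definition of the $\delta$-weakening from Gao et al.\ \cite{DBLP:conf/lics/GaoAC12}: $\phi^{\delta}$ is obtained from $\phi$ by replacing each atomic predicate of the form $f(\mathbf{x}) \geq 0$ (respectively $f(\mathbf{x}) \leq 0$) with its numerical relaxation $f(\mathbf{x}) \geq -\delta$ (respectively $f(\mathbf{x}) \leq \delta$), while leaving the Boolean connectives and the bounded quantifiers untouched. In particular $\phi^{0} = \phi$, and enlarging $\delta$ can only enlarge the half-space defined by every atom.

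Next I would establish (a) and (b) simultaneously by a single structural induction on the bounded $\Sigma_1$ formula $\phi$. For the base case, an atom $f(\mathbf{x}) \geq 0$ satisfies $\{f \geq 0\} \subseteq \{f \geq -\delta'\} \subseteq \{f \geq -\delta\}$ directly from $0 \leq \delta' \leq \delta$, and symmetrically for $\leq$. For the inductive step, monotonicity of the set of satisfying valuations is preserved by $\wedge$ and $\vee$ (intersections and unions of monotone families remain monotone), and by bounded existential and universal quantification over the same fixed domain (if $A_{\delta'} \subseteq A_{\delta}$ pointwise, then $\exists x\, A_{\delta'}(x) \Rightarrow \exists x\, A_{\delta}(x)$ and likewise for $\forall$ over a fixed set). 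This yields $[\phi] \subseteq [\phi^{\delta'}] \subseteq [\phi^{\delta}]$, where $[\cdot]$ denotes the satisfying set.

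I do not expect any serious obstacle here; the argument is essentially a book-keeping induction that the weakening operator is syntactically uniform in $\delta$. The only mild subtlety is making sure that both existential and universal bounded quantifiers are handled, since formulas like $\phi^{C}$ in (\ref{eq:phi_C}) contain a $\forall$; but because the quantification domain does not depend on $\delta$, monotonicity passes through without change. Finally, contraposing the chain $\phi \Rightarrow \phi^{\delta'} \Rightarrow \phi^{\delta}$ yields $\neg \phi^{\delta} \Rightarrow \neg \phi^{\delta'} \Rightarrow \neg \phi$, completing the proof.
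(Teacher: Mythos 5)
Your proposal is correct and follows essentially the same route as the paper: both reduce the claim to the chain $\phi \Rightarrow \phi^{\delta'} \Rightarrow \phi^{\delta}$ and then contrapose. The only difference is presentational --- the paper obtains both forward implications by invoking the result of Gao \emph{et al.} that satisfiability implies satisfiability of the weakening (applied once to $\phi$ and once to $\phi^{\delta'}$, viewing $\phi^{\delta}$ as a further weakening of it), whereas you prove the same monotonicity in $\delta$ directly by structural induction on the atoms, connectives, and bounded quantifiers.
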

(See Appendix \ref{apndx:delta-sat} for an overview of $\delta$-weakening.)
Lemma \ref{lemma:converge} means that unsatisfiability of a weakened formula 
implies unsatisfiability of its strengthening and of the initial formula. 
We next show that when an interval is uncertain, by applying Lemma \ref{lemma:converge} 
we can obtain $\delta$ and a subinterval for which 
a $\delta$-complete decision procedure can give a correct answer.
\begin{proposition} \label{prop:converge}
Let $\phi$ and $\phi^C$ as per (\ref{eq:phi}) and (\ref{eq:phi_C}), and $[u,v]$ an interval. Then:
\begin{equation*}
\begin{split}
&\exists \delta \in \mathbb{Q}^{+}:(\phi([u, v])-\delta\textbf{-sat}) \wedge (\phi^{C}([u, v])-\delta\textbf{-sat}) \Rightarrow \\
&\exists [u', v'] \subseteq [u, v]: (\phi([u', v'])-\textbf{unsat}) \oplus (\phi^{C}([u', v'])-\textbf{unsat})
\end{split}
\end{equation*}
where $\oplus$ denotes exclusive or.
\end{proposition}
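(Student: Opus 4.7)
The plan is to dichotomise on whether either formula is genuinely unsatisfiable on $[u,v]$; recall that a $\delta$-complete procedure may return $\delta$-sat on an unsat formula because of overapproximation, so the hypothesis does not preclude this. Let $B_0 = \{r \in [u,v] : \psi(r)\}$ (so $B_0 = B \cap [u,v]$ for the set $B$ of Definition \ref{def:probboundreach}). Then $\phi([u,v])$ is actually satisfiable iff $B_0 \ne \emptyset$, and $\phi^C([u,v])$ is actually satisfiable iff $B_0 \ne [u,v]$.

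If $B_0 = \emptyset$, take $[u',v'] = [u,v]$: by definition $\phi([u,v])$ is unsat while $\phi^C([u,v])$ is sat, which gives one side of the exclusive disjunction. The symmetric case $B_0 = [u,v]$ yields the other side. In the remaining case $\emptyset \subsetneq B_0 \subsetneq [u,v]$ I would pick an honest witness $r_1 \in B_0$ (not just a $\delta$-weakened one: this is obtained by invoking Lemma \ref{lemma:converge} along a rational sequence $\delta_n \downarrow 0$ and using compactness of $[u,v]$ to extract a convergent subsequence of witnesses, whose limit lies in $B_0$ unless the sequence terminates with an \textbf{unsat} answer for $\phi^{\delta_n}$, sending us back to the first case). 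I would then exhibit a small closed subinterval $[u',v'] \ni r_1$ with $[u',v'] \subseteq B_0$; on such a subinterval $\phi([u',v'])$ is witnessed by $r_1$ while $\phi^C([u',v'])$ is genuinely unsat because every parameter in $[u',v']$ reaches $U$.

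The hard part is justifying that the subinterval $[u',v'] \subseteq B_0$ actually exists around $r_1$. My approach is to lean on the structure of $\psi$: it is an existentially quantified conjunction of flow, jump and unsafe predicates, each depending continuously on the initial parameter by the Lipschitz assumption on the ODE flow. Under the standard modelling convention that $U$ and the jump guards are defined by strict inequalities (hence open), $B_0$ itself is open in $[u,v]$, and any $r_1 \in B_0$ admits an interval neighbourhood contained in $B_0$ on which $r_1$ still witnesses reachability. The delicate subcase is when $r_1$ lies on $\partial U$ (a degenerate, measure-zero event), in which case Baire's theorem applied to the decomposition $[u,v] = B_0 \cup B_0^C$ forces one of the two sets to contain an open interval, and a further bisection of $[u,v]$ together with another application of Lemma \ref{lemma:converge} relocates the argument to an interior point on the side with non-empty interior. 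This is the main obstacle: ensuring that the bisection together with the $\delta$-shrinking converges to an interior witness rather than to a boundary point, which is precisely where continuous dependence on parameters is indispensable.
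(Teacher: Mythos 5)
Your overall strategy is the same as the paper's: reduce to a subinterval that lies entirely inside or entirely outside $B$, argue that one of $\phi,\phi^C$ is genuinely unsatisfiable there, and then obtain the \textbf{unsat} verdict. The paper's own proof simply \emph{asserts} (Equation (\ref{eq:exists-interval})) that a mixed or false-alarm interval contains such a subinterval and then invokes Lemma \ref{lemma:converge} after decreasing $\delta$. Your write-up is more explicit about the case split, and the compactness extraction of an honest witness $r_1$ from a sequence of $\delta_n$-witnesses is a legitimate refinement of what the paper leaves implicit. However, there are two concrete problems.

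First, your handling of the ``delicate subcase'' does not work. Baire's theorem applies to countable unions of \emph{closed} (or nowhere dense) sets; writing $[u,v]=B_0\cup B_0^C$ as a union of two arbitrary sets forces neither of them to contain an interval --- take $B_0$ to be the rationals in $[u,v]$ for a counterexample. So the step ``Baire's theorem \dots forces one of the two sets to contain an open interval'' is false as stated. You partially compensate by assuming that $U$ and the jump guards are open (strict inequalities), which would make $B_0$ open and the neighbourhood argument go through; but Definition \ref{def:HS} only requires these sets to be Borel, so under that assumption you are proving a weaker statement than the one claimed. (The paper's own proof shares this gap --- it asserts the existence of the subinterval without justification --- but your proposed repair does not close it.)

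Second, in the cases $B_0=\emptyset$ and $B_0=[u,v]$ you conclude ``$\phi([u,v])$ is \textbf{unsat}'' directly from genuine unsatisfiability. In this proposition \textbf{unsat} is the \emph{verdict of the $\delta$-complete procedure}, and a genuinely false formula may still be reported $\delta$\textbf{-sat} (a false alarm); indeed the hypothesis tells you both formulas \emph{were} reported $\delta$\textbf{-sat} on $[u,v]$ for some $\delta$. You still need the step the paper makes explicitly: decrease $\delta$ until the $\delta$-weakening itself becomes false (for a bounded $\Sigma_1$ sentence over a compact domain this follows from continuity and compactness, since the relevant minimum is attained and positive), at which point the procedure is forced to answer \textbf{unsat}, and Lemma \ref{lemma:converge} propagates that answer downward. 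Without this step even your ``easy'' cases are incomplete, and the exclusive-or also needs the observation that the formula which \emph{is} genuinely satisfiable on $[u',v']$ can never be reported \textbf{unsat}, since \textbf{unsat} answers are always correct.
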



We now present the full algorithm for computing bounded reachability probability. We begin by addressing
random initial parameters with (un)bounded support. Given $\epsilon \in (0,1]\cap\mathbb{Q}$, it is always 
to possible to find a {\em bounded} region of the random variable support with area larger
than $1-\epsilon$. In fact, such a problem can be stated as a $\delta$-satisfiability question 
and thus solved by a $\delta$-complete procedure. Therefore, the verified integration procedure presented
in Section \ref{sec:vip} can be applied to a random variable with unbounded domain.
If we introduce multiple independent random parameters we can still use the same verified 
integration procedure provided that each random variable is integrated with a higher accuracy, as the
next proposition shows.
\begin{proposition}\label{prop:multrv}
Given a hybrid system with $l$ independent continuous random parameters, to compute with precision 
$\epsilon_{prod} \in (0,1]\cap\mathbb{Q}$ the reachability probability it is sufficient that each random 
variable is integrated with precision $\epsilon$ satisfying:
\begin{equation} \label{eq:multivarmaininequalityspecialcase}
\epsilon_{prod} \ge \sum_{i=1}^l {l\choose{i}}\epsilon^i 
\end{equation}
where ${l}\choose{i}$ is the binomial coefficient.
\end{proposition}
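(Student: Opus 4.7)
My plan is to exploit the independence of the $l$ continuous random parameters, which factors the joint density as $p(r) = \prod_{i=1}^l p_i(r_i)$, and then to bound by a binomial-type expansion the propagated error when the dimension-wise enclosures returned by Algorithm~\ref{alg:integration} are combined through interval arithmetic into an enclosure of the joint reachability probability.

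First I would run Algorithm~\ref{alg:integration} on each marginal $p_i$ with precision $\epsilon$, obtaining a partition $T_i = \{[x_i]_{j_i}\}$ of $R_i$ together with non-negative interval enclosures $[\underline{K}_{i,j_i}, \overline{K}_{i,j_i}]$ of $\int_{[x_i]_{j_i}} p_i$ such that $\sum_{j_i}(\overline{K}_{i,j_i} - \underline{K}_{i,j_i}) \leq \epsilon$ and $\sum_{j_i} \underline{K}_{i,j_i} \leq 1$. The Cartesian product of the $T_i$ induces a partition of $R$, and on every product sub-box $[\rho] = \prod_i [x_i]_{j_i}$ the joint probability measure factorises by independence, so it is enclosed in the interval product $\prod_i [\underline{K}_{i,j_i}, \overline{K}_{i,j_i}]$. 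Classifying each sub-box via $\phi$ and $\phi^C$ and refining the mixed ones (appealing to Proposition~\ref{prop:converge}) gives an enclosure of the reachability probability whose width is at most the sum over all sub-boxes of the widths of their product enclosures.

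The key algebraic step is to bound this total width. Writing $\delta_{i,j_i} = \overline{K}_{i,j_i} - \underline{K}_{i,j_i}$ and using the telescoping identity
\[
\prod_{i=1}^l \overline{K}_{i,j_i} - \prod_{i=1}^l \underline{K}_{i,j_i} = \sum_{\emptyset \neq S \subseteq \{1,\dots,l\}} \Bigl(\prod_{i\in S}\delta_{i,j_i}\Bigr)\Bigl(\prod_{i\notin S}\underline{K}_{i,j_i}\Bigr),
\]
summing over all index tuples $(j_1,\dots,j_l)$ factors into a product of one-dimensional sums, yielding
\[
\sum_{\emptyset\neq S}\Bigl(\prod_{i\in S}\sum_{j_i}\delta_{i,j_i}\Bigr)\Bigl(\prod_{i\notin S}\sum_{j_i}\underline{K}_{i,j_i}\Bigr) \leq \sum_{\emptyset\neq S}\epsilon^{|S|} = \sum_{i=1}^l \binom{l}{i}\epsilon^i,
\]
which is exactly inequality~(\ref{eq:multivarmaininequalityspecialcase}).

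The main obstacle I anticipate is not the algebra but justifying the reduction to the non-mixed case: mixed sub-boxes contribute their entire product enclosure $\prod_i \overline{K}_{i,j_i}$ to the uncertainty, and one must show that by further partitioning (again via Proposition~\ref{prop:converge}) their aggregate contribution can be absorbed into the same binomial bound without inflating $\epsilon$. Once this bookkeeping is in place, choosing $\epsilon$ to satisfy $\epsilon_{prod} \geq \sum_{i=1}^l \binom{l}{i}\epsilon^i$ yields an enclosure of the reachability probability of width at most $\epsilon_{prod}$, as required.
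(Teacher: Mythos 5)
Your proof is correct and rests on the same core bound as the paper's: the width of a product of $l$ intervals, each with lower endpoint at most $1$ and width at most $\epsilon$, is at most $\prod_{i=1}^l(1+\epsilon)-1=\sum_{i=1}^l\binom{l}{i}\epsilon^i$. The difference is one of granularity. The paper applies Fubini's theorem to factor the whole-domain integral $\int_\Omega\prod_i dP_i(r_i)$ into a product of one-dimensional integrals $I_i$, encloses each $I_i$ in an interval $[\hat I_i,\hat I_i+\epsilon_i]$, and worst-cases $\hat I_i=1$; the reduction from ``compute $\int_B$ with precision $\epsilon_{prod}$'' to ``compute $\int_\Omega$ with precision $\epsilon_{prod}$'' is asserted rather than spelled out. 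You instead work directly with the Cartesian-product partition of sub-boxes that Algorithm~\ref{alg:prob-reach} actually accumulates, expand each per-box product width via the telescoping identity, and use the factorisation of the multi-sum together with $\sum_{j_i}\delta_{i,j_i}\le\epsilon$ and $\sum_{j_i}\underline{K}_{i,j_i}\le 1$ to obtain the same binomial bound on the \emph{total} width. This buys a statement about the quantity the algorithm actually reports (the aggregate width of the per-box enclosures over whichever sub-collection of boxes contributes to $[P_{lower}]$ and $[P_{upper}]$), and it makes explicit why the mass factors are bounded by $1$; the paper's version is shorter but leaves that aggregation implicit. One caveat: the handling of mixed boxes that you flag as the ``main obstacle'' is not part of this proposition --- it is dealt with separately by Propositions~\ref{prop:converge} and~\ref{prop:bounded} --- so you need not absorb it into the $\epsilon$ bookkeeping here; the proposition only concerns the precision of the integration step.
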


Suppose now a hybrid system has (continuous) nondeterministic parameters. Then the 
probability that the system reaches the unsafe region becomes a function of the nondeterministic 
parameters. In particular, the indicator function 
$I_{U}(r, z)$ can be equal to 0 and 1 for the same values of the continuous
{\em random} parameters, \ie, there may exist $r_0$ and $z_0 \neq z_1$ such that 
$I_{U}(r_0, z_0) = 0$ and $I_{U}(r_0, z_1) = 1$. 
Therefore, it is in general impossible to provide any guarantees on the length of probability interval,
and we need to compute an enclosure for all probabilities.
We will use the following symbolic notation for hybrid systems:
\begin{itemize}
	\item{\textbf{HA} ({\em Hybrid Automaton})} - a hybrid system without initial random parameters 
		(only deterministic and nondeterministic).
	\item{\textbf{PHA} ({\em Probabilistic Hybrid Automaton})} - a hybrid system with random and 
		deterministic continuous initial parameters (no nondeterminism).
	\item{\textbf{NPHA} ({\em Nondeterministic Probabilistic Hybrid Automaton})} - a hybrid system 
		with random, deterministic and nondeterministic continuous initial parameters.
\end{itemize}
We first state the algorithm for \textbf{NPHA}s with no discrete probability.
\begin{proposition}\label{prop:cn_nd}
Given $\epsilon \in (0, 1] \cap \mathbb{Q}$, $k\in\mathbb{N}$, and an {\em\textbf{NPHA}} {\em without discrete
random parameters}, there exists an algorithm for computing an interval containing the set of $k$-step 
reachability probabilities. If the system has no nondeterministic parameters, the algorithm returns an 
interval of size not larger than $\epsilon$ containing the $k$-step reachability probability 
(\ref{eq:probabilistic-reachability_rv}).
\end{proposition}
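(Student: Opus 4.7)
The plan is to construct the algorithm compositionally from the pieces already developed: the validated integration of Section~\ref{sec:vip}, the $\delta$-complete evaluation of the pair $(\phi,\phi^{C})$, the refinement mechanism of Proposition~\ref{prop:converge}, and the error budgeting of Proposition~\ref{prop:multrv}. For the \textbf{NPHA} case without discrete random parameters I will first reduce to bounded support: given the target $\epsilon$, I will allocate a fraction $\epsilon_{1}$ of the error budget to truncating each unbounded random parameter. Because the truncation condition ``$\int_{[a,b]}f_{r}\,dr \ge 1-\epsilon_{1}$'' is a bounded $\Sigma_{1}$ sentence over the reals, a $\delta$-complete procedure can return such an $[a,b]$ in finitely many steps. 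After this step, every random parameter lives in a bounded interval.

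Next, I will pick per-variable precisions $\epsilon_{2}$ satisfying the inequality of Proposition~\ref{prop:multrv} for a chosen product precision $\epsilon_{prod}\le \epsilon - \epsilon_{1}$, and call Algorithm~\ref{alg:integration} on each marginal density to obtain a partition $T$ of the random-parameter box $R$ (by taking Cartesian products of the one-dimensional partitions). For each cell $[\rho]\in T$, paired with the full nondeterministic box $Z$, I will invoke a $\delta$-complete procedure on both $\phi([\rho]\times Z)$ and $\phi^{C}([\rho]\times Z)$. The four outcomes listed after (\ref{eq:phi_C}) sort the cells into three classes: \textsc{In} (when $\phi^{C}$ is \textbf{unsat}, so $[\rho]\times Z \subseteq B$ for every choice of $z$), \textsc{Out} (when $\phi$ is \textbf{unsat}, so $[\rho]\times Z$ is disjoint from $B$), and \textsc{Mixed} (both \textbf{$\delta$-sat}). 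The lower bound on the reachability probability is then $\sum_{[\rho]\in \textsc{In}}[K]([\rho])$ and the upper bound is that quantity plus $\sum_{[\rho]\in \textsc{Mixed}}[K]([\rho])$, taking lower endpoints of the former and upper endpoints of the latter. Both sums are finite, and by construction every probability $\int_{B|_{z_{0}}}dP$ for $z_{0}\in Z$ lies in the resulting interval.

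For the stronger claim under no nondeterminism, I will show that \textsc{Mixed} cells can be eliminated until their total integrated weight falls under the remaining error budget $\epsilon_{3}=\epsilon-\epsilon_{1}-\epsilon_{prod}$. Here Proposition~\ref{prop:converge} is the key: every \textsc{Mixed} cell contains a subinterval on which either $\phi$ or $\phi^{C}$ is \textbf{unsat} for some weakening $\delta'$, so by bisecting such a cell (using the bisection stack exactly as in Algorithm~\ref{alg:integration}) and decreasing $\delta$ following Lemma~\ref{lemma:converge}, each bisection strictly removes a classifiable sub-box from the \textsc{Mixed} pile. Iterating this until the width of each surviving \textsc{Mixed} cell $[\rho]$ times the interval bound $[f]([\rho])$ is below a share of $\epsilon_{3}$ yields the desired $\epsilon$-tight enclosure of the single reachability probability.

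The main obstacle is the termination argument for the \textsc{Mixed}-refinement loop in the \textbf{PHA} case. Proposition~\ref{prop:converge} only guarantees the existence of a classifiable sub-box, not that repeated bisection plus tightening of $\delta$ eventually classifies (or shrinks below threshold) every sub-box produced. I will discharge this by the standard compactness-plus-regularity argument: the integrand is continuous (indeed in $P_{C^{5}[a,b]}$ in the invocation of Algorithm~\ref{alg:integration}), so on any mixed cell the interval enclosure $[K]([\rho])$ shrinks to zero as $width([\rho])\to 0$; combined with Proposition~\ref{prop:converge}, this ensures that after finitely many bisection/$\delta$-tightening rounds the cumulative integrated weight of the remaining \textsc{Mixed} cells falls below $\epsilon_{3}$. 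The $\delta$-tightening step is the delicate one because $\delta$ must remain uniform for each individual call to the decision procedure, so in practice I will interleave halving $width([\rho])$ with halving $\delta$ on each iteration, guaranteeing that the condition of Proposition~\ref{prop:converge} is eventually met on one side of the bisection.
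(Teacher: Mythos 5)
Your proposal follows essentially the same route as the paper: truncating unbounded supports via a $\delta$-complete query (the paper's Proposition \ref{prop:unbounded} and formula (\ref{eq:epsilon-inf})), building a validated-integration partition with the error budget of Proposition \ref{prop:multrv}, classifying cells by the $\phi$/$\phi^{C}$ dichotomy into the sets $B_{unsat}$, $B_{C^{unsat}}$, $B_{\delta-sat}$ (your \textsc{Out}/\textsc{In}/\textsc{Mixed}), taking $P_{lower}$ and $P_{upper}$ from the \textsc{In} and \textsc{In}$\,\cup\,$\textsc{Mixed} masses, and refining mixed cells via Proposition \ref{prop:converge} in the purely probabilistic case while stopping at box width $\epsilon$ in the nondeterministic case. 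Your explicit compactness discussion of termination is, if anything, more careful than the paper's, which asserts $\int_{B_{\delta-sat}} f\,dr \rightarrow 0$ by essentially the same appeal to Proposition \ref{prop:converge} and shrinking cell widths.
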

The pseudo-code of the algorithm is presented in Algorithm \ref{alg:prob-reach}. Informally,
the algorithm starts by getting an interval partition from the validated integration
procedure (Algorithm \ref{alg:integration}) for each random variable; also, a candidate
probability interval is initialised to $[0,1]$. 
Then, it evaluates the formulae $\phi$ and $\phi^C$ on the current partition, which will be 
refined whenever both $\phi$ and $\phi^C$ are $\delta$-\textbf{sat}. Instead, an \textbf{unsat} 
answer is used to refine the probability interval. The termination condition depends on the model type.
If there are no nondeterministic parameters, then the algorithm will terminate when the width 
of the probability interval satisfies the desired size $\epsilon$. Otherwise, the algorithm 
terminates when the maximum length of the boxes in the partition is smaller than $\epsilon$. 
(Given a box we can split it into $2^{n}$ boxes of (pairwise) equal size in such a way that each 
interval in the box is reduced. However, any division strategy can be 
applied as long as the size of each interval forming the box is reduced.)

\begin{algorithm}
\label{alg:prob-reach}
input: continuous random parameters $\overline{r} = \{r_{1}, ..., r_l\}$ with their probability 
densities $\overline{f}(r)$, $\epsilon_{prod}\in (0,1]\cap\mathbb{Q}$, hybrid system description $\phi$\;
output: interval $[I]$ enclosing the reachability probabilities\\
\tcp{obtain $\epsilon$ from (\ref{eq:multivarmaininequalityspecialcase}) using $\epsilon_{prod}$}
$\epsilon_{inf} = t \epsilon$\;
$\epsilon_{prob} = (1-t) \epsilon$\;

\tcp{obtain bounds from (\ref{eq:epsilon-inf}) for each continuous random parameter}
$[\overline{a}, \overline{b}] = \bigcup_{i=1}^{l}($\textbf{bounds}$(f(r_{i}), \epsilon_{inf})$)\;

$[\overline{r}] = \bigcup_{i=1}^{l}($\textbf{Algorithm1}$(f(r_{i}), [a_{i}, b_{i}], \epsilon_{prob})$)\;

\tcp{get Cartesian product of intervals from obtained partitions}
$B$.push$([r_{1}]\times \cdots \times[r_{l}])$\;
\tcp{set initial probability intervals for upper/lower approximation}
$[P_{lower}] = [0.0, 0.0]$\;
$[P_{upper}] = [1.0, 1.0]$\;
\tcp{consider unbounded segments}
$[P_{upper}] = [P_{upper}] + 1 - \prod_{i=1}^{l}\int_{a_{i}}^{b_{i}}\, f_{i}(x)\, dx $\;
\While{true}
{
	\tcp{stack containing extra divisions of the boxes}
	$D = \emptyset$\;
	\While{$size(B) > 0$}
	{
	$\textbf{box} = B$.pop()\;
	\tcp{$\delta$-complete procedure evaluates formula}
	\If{$\phi(\textbf{box})$ = $\delta$-sat}
	{
		\tcp{$\delta$-complete procedure evaluates formula}
		\If{$\phi^{C}(\textbf{box})$ = $\delta$-sat}
		{
			\tcp{split initial box into $2^{l}$ boxes of equal size}
			$D$.push(branch$(\textbf{box})$)\;
		}\Else
		{
			\tcp{increase lower approximation}
			$[P_{lower}] = [P_{lower}] + [S](\textbf{box})$\;
		}
	} \Else
	{
		\tcp{decrease upper approximation}
		$[P_{upper}] = [P_{upper}] - [S](\textbf{box})$\;
	}
	}
	$B = D$\;
	\If{MODEL\_TYPE($\phi$) = \textbf{PHA}}
	{
		\tcp{termination condition when nondeterminism is absent}
		\If{$\overline{[P_{upper}]} - \underline{[P_{lower}]} \le \epsilon_{prob}$}
		{
			\Return $[\underline{[P_{lower}]}, \overline{[P_{upper}]}]$\;
		}
	}
	\If{MODEL\_TYPE($\phi$) = \textbf{NPHA}}
	{
		\If{$\max_{\textbf{box} \in B}(|\textbf{box}|) \le \epsilon$}
		{
			\Return $[\underline{[P_{lower}]}, \overline{[P_{upper}]}]$\;
		}
	}	
}

\caption{Probabilistic $\delta$-reachability \textbf{PHA} and \textbf{NPHA}}
\end{algorithm}

\begin{theorem}\label{thm:dd}
Given $\epsilon \in (0, 1] \cap \mathbb{Q}$, $k\in\mathbb{N}$, and a full {\em\textbf{NPHA}},
there exists an algorithm for computing an interval containing the set of $k$-step reachability 
probabilities. If the system has no nondeterministic parameters, 
the algorithm returns an interval of size not larger than $\epsilon$ containing the
$k$-step reachability probability (\ref{eq:probabilistic-reachability_rv}).
\end{theorem}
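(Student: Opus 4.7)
The plan is to reduce Theorem~\ref{thm:dd} to Proposition~\ref{prop:cn_nd} by conditioning on the discrete random parameters. Since $D = D_0 \times \cdots \times D_p$ is a \emph{finite} set of reals, it has finitely many elements, say $D = \{d^{(1)},\dots,d^{(N)}\}$ with $N = \prod_{i=0}^p |D_i|$, and each carries a known probability mass $p_j = P(d^{(j)})$ with $\sum_j p_j = 1$.

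First, I would fix $\epsilon\in(0,1]\cap\mathbb{Q}$ and, for each $j\in\{1,\dots,N\}$, define the \emph{conditional model} obtained from the given \textbf{NPHA} by substituting the discrete random parameters with the constant vector $d^{(j)}$. This conditional model is an \textbf{NPHA} (or a \textbf{PHA} in the no-nondeterminism case) with no discrete random parameters, so Proposition~\ref{prop:cn_nd} applies: I invoke its algorithm with precision $\epsilon$ to obtain an interval $[a_j,b_j]$ enclosing the set of $k$-step reachability probabilities conditional on $d^{(j)}$, and in the PHA subcase we additionally have $b_j-a_j\le\epsilon$. By the law of total probability (which is justified because the discrete component is independent of the continuous components by construction of the product measure on $\Lambda=D\times R\times Z$), the $k$-step reachability probability as a function of the nondeterministic parameter $z\in Z$ satisfies
\[
P(z) \;=\; \sum_{j=1}^N p_j\, P_j(z)
\]
where $P_j(z)$ is the reachability probability conditional on $d^{(j)}$, and thus $P(z)\in\sum_{j=1}^N p_j\,[a_j,b_j] = \bigl[\sum_j p_j a_j,\; \sum_j p_j b_j\bigr]$ for every $z\in Z$. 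This Minkowski sum is the output interval of the algorithm.

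Correctness of enclosure follows immediately from monotonicity of interval arithmetic: since $P_j(z)\in[a_j,b_j]$ for every $z$, the convex combination with nonnegative weights $p_j$ lies in the combined interval. Termination is inherited from the $N$ invocations of the algorithm of Proposition~\ref{prop:cn_nd} together with the finiteness of $D$. For the PHA subcase the precision bound follows from the simple calculation
\[
\sum_{j=1}^N p_j b_j - \sum_{j=1}^N p_j a_j \;=\; \sum_{j=1}^N p_j(b_j-a_j)\;\le\;\epsilon\sum_{j=1}^N p_j \;=\;\epsilon,
\]
so no inflation of per-subproblem precision is needed.

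The only subtle point, which I expect to be the main obstacle to write cleanly rather than technically hard, is justifying that (i) the reduction to conditional models preserves the semantics of Definitions~\ref{def:HS} and~\ref{def:probboundreach}, so that $P(z)=\sum_j p_j P_j(z)$ actually holds pointwise in $z$, and (ii) the Minkowski sum of enclosures of the ranges $\{P_j(z):z\in Z\}$ really contains the range $\{P(z):z\in Z\}$ (this is a one-line convex-combination argument, but must be stated since the range itself need not be an interval). Both points reduce to the Borel-measurability granted by Proposition~\ref{prop:BBorel} and Fubini's theorem on the product measure, so they can be dispatched by referencing the well-posedness of Definition~\ref{def:probboundreach}. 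Once these are in place, the algorithm is the obvious outer loop that enumerates $D$, calls Algorithm~\ref{alg:prob-reach} on each conditional model with precision $\epsilon$, and returns the weighted sum of the resulting intervals.
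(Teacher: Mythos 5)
Your proposal is correct and follows essentially the same route as the paper's proof: enumerate the finitely many assignments of the discrete random parameters, substitute each into the model, solve the resulting conditional \textbf{PHA}/\textbf{NPHA} via Algorithm~\ref{alg:prob-reach} (Proposition~\ref{prop:cn_nd}), and return the probability-mass-weighted sum of the resulting intervals. The only additional case the paper treats is the degenerate one where substitution leaves a plain \textbf{HA} (no remaining random or nondeterministic parameters), which it handles directly with the decision procedure returning $[0,0]$, $[1,1]$ or $[0,1]$; conversely, your explicit width calculation $\sum_j p_j(b_j-a_j)\le\epsilon$ for the no-nondeterminism case is a detail the paper leaves implicit.
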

Algorithm \ref{alg:gen-prob-reach} drives the whole verification loop, while also handling
discrete random parameters (with essentially the same technique as before). Notice that when
the model has continuous parameters, Algorithm \ref{alg:prob-reach} is utilised.

\begin{algorithm}
\label{alg:gen-prob-reach}
input: continuous random parameters $\overline{r} = \{r_1, \ldots , r_l\}$ with probability 
densities $\overline{f}(r)$, discrete random parameters $\{D_{1}, \ldots, D_p\}$ with probability 
distributions $\overline{p}(\cdot)$, 
$\epsilon_{prod}\in (0,1]\cap\mathbb{Q}$, hybrid system description $\phi$\;
output: interval $[I]$ enclosing reachability probability

\tcp{obtain Cartesian product of discrete random parameters}
$DD$.push$(D_{1}\times \cdots \times D_p)$\;
\tcp{set initial probability intervals for upper/lower approximation}
$P = \emptyset$\;
\While{$size(DD) > 0$}
{
	$\textbf{dd} = DD$.pop()\;
	$m_{\textbf{dd}} = \prod_{d \in \textbf{dd}}p(d)$\;
	\tcp{syntactically replace discrete parameters and get new model}
	\tcp{if no other parameters are present, $\phi_\mathbf{dd}$ becomes an \textbf{HA}}
	$\phi_\textbf{dd} = \phi [(D_1, \ldots D_p) \backslash \textbf{dd}$]\;
	\If{MODEL\_TYPE($\phi_\mathbf{dd}$) = \textbf{HA}}
	{
		\tcp{$\delta$-complete procedure evaluates formula}
		\If{$\phi_\mathbf{dd}$ = $\delta$-sat}
		{
			\tcp{$\delta$-complete procedure evaluates formula}
			\If{$\phi^{C}_\mathbf{dd}$ = $\delta$-sat}
			{
				\tcp{could not correctly decide for $\phi_\mathbf{dd}$}
				$P$.push($m_{\textbf{dd}} \cdot [0.0, 1.0]$)\;
			}\Else
			{
				\tcp{formula $\phi_\mathbf{dd}$ is sat}
				$P$.push($m_{\textbf{dd}} \cdot [1.0, 1.0]$)\;
			}
		} \Else
		{
			\tcp{formula $\phi_\mathbf{dd}$ is unsat}
			$P$.push($m_{\textbf{dd}} \cdot [0.0, 0.0]$)\;
		}
	}
	\If{(MODEL\_TYPE($\phi_\mathbf{dd}$) = \textbf{PHA} $\vee$ \textbf{NPHA})}
	{
		$[I] = \textbf{Algorithm 2}(\overline{r}, \epsilon_{prod}, \phi_{\textbf{dd}})$\;
		\tcp{add obtained probability interval to the stack}
		$P$.push($m_{\textbf{dd}} \cdot [I]$)\;
	}
}
\tcp{obtain sum of all probability intervals on the stack}
$[\underline{[P_{lower}]}, \overline{[P_{upper}]}] = \sum_{[I] \in P}[I]$\;
\Return $[\underline{[P_{lower}]}, \overline{[P_{upper}]}]$\;
\caption{Main \preach\ algorithm}
\end{algorithm}

\begin{theorem}\label{thm:complexity}
The complexity of Algorithm \ref{alg:prob-reach} is ${NP}^{(\Sigma_{2}^{P})^{C}}$, where 
$P \subseteq C \subseteq PSPACE$ is the complexity of the terms in the description of the
hybrid system. With Lipschitz-continuous ODEs terms the complexity is $PSPACE$-complete.
\end{theorem}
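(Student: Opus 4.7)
The plan is to analyse Algorithm~\ref{alg:prob-reach} by decomposing it into three parts: (i) the validated integration step that produces the initial partition, (ii) the $\delta$-complete decision calls on the reachability formula $\phi$ and on its dual $\phi^{C}$, and (iii) the outer box-refinement loop that is triggered when both formulae return $\delta$-sat.

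First I would classify the two formulae syntactically: $\phi$ in (\ref{eq:phi}) is a bounded $\Sigma_{1}$-sentence, while $\phi^{C}$ in (\ref{eq:phi_C}) is a bounded $\exists\forall$-sentence, hence a bounded $\Sigma_{2}$-sentence. By the Gao--Avigad--Clarke classification of $\delta$-decidability for bounded quantifier formulas over the reals, if the atomic terms lie in a class $C$ with $P\subseteq C\subseteq PSPACE$, then $\delta$-satisfiability of a bounded $\Sigma_{n}$-sentence sits in the $n$-th level of the polynomial hierarchy relative to $C$. Hence a single evaluation of $\phi$ is in $NP^{C}$ and a single evaluation of $\phi^{C}$ is in $(\Sigma_{2}^{P})^{C}$; the latter oracle subsumes the former.

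Next, by Proposition~\ref{prop:IntegrationComplexity} the integration step runs in NP and returns a partition of polynomial cardinality. The outer while-loop iterates over the boxes in this partition, issuing at most two oracle queries per box; any box for which both queries return $\delta$-sat is split, and Proposition~\ref{prop:converge} guarantees that a polynomially bounded number of refinements of $\delta$ and of the box suffice to reach a decisive answer. Thus the algorithm is an NP computation with a polynomial number of queries to an $(\Sigma_{2}^{P})^{C}$-oracle, yielding the upper bound $NP^{(\Sigma_{2}^{P})^{C}}$.

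For the second part, I would instantiate $C=PSPACE$, which is the bound for Lipschitz-continuous ODE solutions (Ko). Then $(\Sigma_{2}^{P})^{PSPACE}=PSPACE$ and $NP^{PSPACE}=PSPACE$, giving the PSPACE upper bound. Hardness follows by reduction from bounded $\delta$-reachability with Lipschitz-continuous dynamics, which is PSPACE-complete by Gao et al.: any such instance is obtained as the special case of a \textbf{PHA} with a single trivially distributed random parameter, and the reachability decision is read off from whether the returned probability interval is $\{0\}$ or $\{1\}$. The main obstacle I anticipate is the bookkeeping in step (iii): verifying that the adaptive refinement driven by Proposition~\ref{prop:converge}, combined with the splitting of mixed boxes, stays within polynomially many oracle queries so that the hierarchy level is not inflated beyond $NP^{(\Sigma_{2}^{P})^{C}}$.
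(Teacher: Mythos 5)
Your proposal follows essentially the same route as the paper: classify $\phi$ as a bounded $\Sigma_1$-sentence and $\phi^C$ as a bounded $\Sigma_2$-sentence, invoke the Gao--Avigad--Clarke complexity results to place each decision call in $(\Sigma_{2}^{P})^{C}$, fold in the NP bound of Proposition~\ref{prop:IntegrationComplexity} for the integration step, and conclude $NP^{(\Sigma_{2}^{P})^{C}}$, with the Lipschitz-ODE case lifting everything to PSPACE-completeness because the underlying $\delta$-SMT subproblem is already PSPACE-complete. The one place where your justification diverges is the outer ``NP'': you try to bound the number of adaptive refinements by appealing to Proposition~\ref{prop:converge}, but that proposition is purely existential --- it guarantees that a decisive subinterval and $\delta$ exist, not that polynomially many refinements suffice --- so the polynomial bound you attribute to it is unsupported. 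The paper avoids this issue by reading the outer NP as a guess-and-verify computation: nondeterministically guess the partition, then verify in polynomial time (relative to the $(\Sigma_{2}^{P})^{C}$ oracle) that it satisfies $\overline{[P_{upper}]} - \underline{[P_{lower}]} \le \epsilon_{prob}$; no bound on the deterministic refinement loop is needed. Your explicit hardness reduction for the PSPACE-complete case (embedding bounded $\delta$-reachability as a \textbf{PHA} with a trivial random parameter) is slightly more detailed than the paper's, which simply observes that the algorithm contains a PSPACE-complete subproblem.
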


\vspace{-3ex}
\section{Experiments}\label{sec:Exp}
\vspace{-1ex}

We have implemented our algorithms in \preach; its source code 
and the models studied are on \texttt{https://github.com/dreal/probreach}. 
(The tool implementation is explained in \cite{ProbReach}.) 
The results below can be also accessed on 
\texttt{https://homepages.ncl.ac.uk/f.shmarov/probreach}. All experiments were
carried out on a multi-core Intel Xeon E5-2690 2.90GHz system running Linux Ubuntu 14.04LTS. 
The algorithms were also parallelised, and the results below feature the \preach\  CPU time of the
parallel version on 24 cores.

We have applied \preach\ to four hybrid models: a 2D-moving bouncing ball, human starvation, prostate
cancer therapy, and car collision scenario. The models feature a variety of highly nontrivial dynamics.
For example, the ODEs for the prostate cancer therapy model \cite{Liu2015} include exponential terms:
\[
\begin{split}
\frac{dx}{dt} = &\left(\frac{\alpha_{x}}{1+\mathrm{e}^{(k_1-z)k_2}}-\frac{\beta_{x}}{1+\mathrm{e}^{(z-k_3)k_4}}-m_1\left(1-\frac{z}{z_0}\right)-c_1\right)x+c_2 \\
\frac{dy}{dt} = &m_1\left(1-\frac{z}{z_0}\right)x + \left(\alpha_{y}\left(1-d_0\frac{z}{z_0}\right)-\beta_{y}\right)y \\
\frac{dz}{dt} = &-z \gamma - c_3
\end{split}
\]
More details on the models used in the experiments and the actual \preach\ model file for the
prostate cancer therapy can be found in Appendix \ref{apndx:Models}.

All experiments were validated using Monte Carlo probability estimation in MATLAB (the
reported CPU times are for one core).
In particular, we calculated confidence intervals using the 
Chernoff-Hoeffding bound \cite{hoeffding}.
All results are given in the tables below, where the top half of each table contains the results 
obtained using our approach (\preach), while the bottom half reports the Monte Carlo results. 
For space reasons, the results of the bouncing ball are presented in the Appendix \ref{apndx:Models}.
Monte Carlo simulation of continuous nondeterminism (for \textbf{NPHA} models)
was implemented by first uniformly discretising the domain of the nondeterministic 
parameters. Then, for each (discretised) 
value of the parameters we built a confidence interval using the Chernoff-Hoeffding bound. 
Finally, the Monte Carlo interval reported in the tables below is the union of all such confidence 
intervals. Note that Monte Carlo intervals for \textbf{NPHA}s will be in general larger than $2\zeta$, and 
they will not be proper confidence intervals because of the nondeterministic parameters.
From the results we can see that our technique performs well even on highly nonlinear ODEs models such
as the prostate cancer treatment model, despite having unavoidably high complexity (see 
Theorem \ref{thm:complexity}). All Monte Carlo 
intervals cover the enclosures computed by \preach, thus confirming the correctness of our algorithms 
and their implementation. 

\vspace{-1ex}

\paragraph{\bf Table legend:}
{\small
$k$ = number of discrete transitions; $\epsilon$ = desired size of probability interval 
(\textbf{PHA}s only; lower box size limit for \textbf{NPHA});
$length$ = length of probability interval returned by \preach;
$\zeta,c$ = half-interval width and coverage probability for Chernoff bound;
$N$ = sample size from Chernoff bound;
$CPU$ = CPU time (sec).
}

\hide{
\begin{table}[ht!] 
\begin{adjustwidth}{-1in}{-1in}
\centering
\caption{Bouncing ball model; see legend in Section \ref{sec:Exp}.}
\vspace{-1ex}
\begin{tabular}{c c c c c c c r}
\hline
Method & $k$ & $\epsilon$ & $length$ & & Probability interval & $CPU_{seq}$ & $CPU_{par}$\\ [0.5ex] 
\hline 
\multirow{4}{4em}{{\bf Prob Reach}} & 0 & $10^{-9}$ & $5\cdot10^{-10}$ & &[8.21757e-05, 8.21762e-05] & 64 & 5\\ [0.5ex]
& 1 & $10^{-9}$ & $10^{-9}$ & &[0.1379483631, 0.1379483641] & 192 & 29\\ [0.5ex] 
& 3 & $10^{-9}$ & $8\cdot10^{-10}$ & &[0.7387674005, 0.7387674013] & 3,806 & 563\\ [0.5ex] 
\hline \hline
Method & $k$ & $\zeta$ & $c$ & $P$ & Monte Carlo interval & $CPU_{seq}$ & $N$\\ [0.5ex] 
\hline 
\multirow{4}{4em}{Monte Carlo}& 0 & $5\cdot 10^{-6}$ & 0.99999 & 8.220032e-05 & [7.720032e-05, 8.720032e-05] & 16,455 & 230,258,509,300\\ [0.5ex]
& 1 & $5\cdot10^{-6}$ & 0.99999 & 0.1379449 & [0.1379399, 0.1379499] & 19,646 & 230,258,509,300 \\ [0.5ex] 
& 3 & $5\cdot10^{-6}$ & 0.99999 & 0.7387684 & [0.7387634, 0.7387734] & 20,975 & 230,258,509,300 \\ [0.5ex] 
\hline 
\end{tabular} 
\label{table:bouncing-ball} 
\end{adjustwidth}
\end{table} 
}

\hide{
\begin{table}[ht!] 
\begin{adjustwidth}{-1in}{-1in}
\centering
\caption{Bouncing ball model; see legend in Section \ref{sec:Exp}.}
\vspace{-1ex}
\begin{tabular}{c c c c c c c r}
\hline
Method & Model type& $k$ & $\epsilon$ & $length$ & Probability interval & $CPU$ & \\ [0.5ex] 
\hline 
\multirow{2}{4em}{{\sffamily Prob Reach}} &\textbf{NPHA}& 0 & $10^{-3}$ & $2.38\cdot10^{-4}$ & [0.000013103, 0.000250681] & 223 & \\ [0.5ex]
&\textbf{NPHA}& 1 & $10^{-3}$ & $6.464\cdot10^{-2}$ &[0.0647381, 0.12937951] & 1,605 & \\ [0.5ex] 
\hline \hline
Method & Model type & $k$ & $\zeta$ & $c$ & Monte Carlo interval & $CPU$ & $N$\\ [0.5ex] 
\hline 
\multirow{2}{4em}{Monte Carlo}&\textbf{NPHA}& 0 & $5\cdot 10^{-3}$ & 0.99 & [0, 0.00520629] & 1,482 & 92,104\\ [0.5ex]
&\textbf{NPHA}& 1 & $5\cdot10^{-3}$ & 0.99 & [0.0585, 0.1367] & 1,485 & 92,104 \\ [0.5ex] 
\hline 
\end{tabular} 
\label{table:bouncing-ball} 
\end{adjustwidth}
\end{table} 
}

\hide{
\begin{table}[ht!] 
\begin{adjustwidth}{-1in}{-1in}
\centering
\caption{Controlled bouncing ball model; see legend in Section \ref{sec:Exp}.}
\vspace{-1ex}
\begin{tabular}{c c c c c c c r}
\hline
Method&$k$ & $\epsilon$ & $length$ & &Probability interval & $CPU_{seq}$ & $CPU_{par}$\\ [0.5ex] 

\hline 
\multirow{2}{4em}{{\bf Prob Reach}}&2 & $10^{-2}$ & $8\cdot10^{-3}$ & &[0.199, 0.207] & 70 & 15\\ [0.5ex]
&2 & $10^{-9}$ & $3\cdot10^{-10}$ & &[0.2049030217, 0.204903022] & 8,332 & 1,156\\ [0.5ex] 
\hline\hline  
Method &$k$ & $\zeta$ & $c$ & $P$ & Monte Carlo interval & $CPU_{seq}$ & $N$\\ [0.5ex] 
\hline 
Monte Carlo&2 & $5\cdot10^{-3}$ & 0.99 & 0.2045948 & [0.1995948, 0.2095948] & 50,528 & 92,104\\ [0.5ex]
\hline 
\end{tabular} 
\label{table:controlled-bouncing-ball} 
\end{adjustwidth}
\end{table} 
}

\vspace{-3ex}

\begin{table}[ht!] 
\begin{adjustwidth}{-1in}{-1in}
\caption{Starvation model; see legend in Section \ref{sec:Exp}.}
\centering
\vspace{-1ex}
\begin{tabular}{c c c c c c c r}
\hline
Method&Model type&$k$ & $\epsilon$ & $length$ & Probability interval & $CPU$ &\\ [0.5ex] 

\hline 
\multirow{2}{4em}{{\sffamily Prob Reach}}&\textbf{NPHA}&0 & $10^{-3}$ & $4.245\cdot10^{-3}$ &  [0.9219413, 0.92618671] & 1,152 & \\ [0.5ex]
&\textbf{PHA}&0 & $10^{-3}$ & $6.795\cdot10^{-4}$ &  [0.92455817, 0.92523768] & 23 & \\ [0.5ex]
\hline\hline  
Method &Model type&$k$ & $\zeta$ & $c$ &  Monte Carlo interval & $CPU$ & $N$\\ [0.5ex] 
\hline 
\multirow{2}{4em}{Monte Carlo}&\textbf{NPHA}& 0 & $5\cdot 10^{-3}$ & 0.99 &  [0.9179, 0.9311] & 12,433 & 92,104\\ [0.5ex]
&\textbf{PHA}& 0 & $5\cdot10^{-3}$ & 0.99 &  [0.9193355, 0.9293355] & 2,868 & 92,104 \\ [0.5ex] 
\hline

\end{tabular} 
\label{table:starvation} 
\end{adjustwidth}
\end{table} 

\vspace{-8ex}

\begin{table}[ht!] 
\begin{adjustwidth}{-1in}{-1in}
\caption{Prostate cancer therapy model; see legend in Section \ref{sec:Exp}.}
\centering
\vspace{-1ex}
\begin{tabular}{c c c c c c c r}
\hline
Method&Model type&$k$ & $\epsilon$ & $length$ & Probability interval & $CPU$ &\\ [0.5ex] 

\hline 
\multirow{2}{4em}{{\sffamily Prob Reach}}&\textbf{PHA}&1 & $10^{-3}$ & $6.022\cdot10^{-4}$ & [0.47380981, 0.47441201] & 737 & \\ [0.5ex]
&\textbf{NPHA}&1 & $10^{-4}$ & $1.763\cdot10^{-3}$ & [0.4725522, 0.47431526] & 89,925 & \\ [0.5ex]
\hline\hline  
Method& Model type &$k$ & $\zeta$ & $c$ & Monte Carlo interval & $CPU$ & $N$\\ [0.5ex] 
\hline 
\multirow{2}{4em}{Monte Carlo}&\textbf{PHA}&1 & $1\cdot10^{-2}$ & 0.99 & [0.4648111, 0.4848111] & 5,700 & 23,026 \\ [0.5ex] 
&\textbf{NPHA}&1 & $1\cdot 10^{-2}$ & 0.99 & [0.4583, 0.4890] & 12,309 & 23,026\\ [0.5ex]
\hline

\end{tabular} 
\label{table:cancer} 
\end{adjustwidth}
\end{table} 

\vspace{-2ex}
\begin{table}[ht!] 
\begin{adjustwidth}{-1in}{-1in}
\caption{Car collision model; see legend in Section \ref{sec:Exp}.}
\centering
\vspace{-1ex}
\begin{tabular}{c c c c c c c r}
\hline
Method&Model type&$k$ & $\epsilon$ & $length$ & Probability interval & $CPU$ &\\ [0.5ex] 

\hline 
{\preach}&\textbf{PHA}&4 & $10^{-3}$ & $8.369\cdot10^{-4}$ & [0.5063922, 0.5072291] & 1,869 & \\ [0.5ex]
\hline\hline  
Method &Model type&$k$ & $\zeta$ & $c$ & Monte Carlo interval & $CPU$ & $N$\\ [0.5ex] 
\hline 
Monte Carlo&\textbf{PHA}&4 & $5\cdot 10^{-3}$ & 0.99 & [0.496629, 0.506629] & 32,201 & 92,104\\ [0.5ex]
\hline

\end{tabular} 
\label{table:cars} 
\end{adjustwidth}
\end{table} 

\hide{
\begin{table}[ht!] 
\caption{Insulin-glucose regulatory model; see legend in Section \ref{sec:Exp}. In this 
table, \textbf{ProbReach} $CPU_{seq}$ is the sum of all the actual CPU times of the parallel 
implementation.}
\begin{adjustwidth}{-1in}{-1in}
\centering
\vspace{-1ex}
\begin{tabular}{c c c c c c c r}
\hline
Method&$k$ & $\epsilon$ & $length$ && Probability interval & $CPU_{seq}$ & $CPU_{par}$\\
\hline 
\multirow{3}{4em}{{\bf Prob Reach}}&0 & $10^{-2}$ & $5.328\cdot10^{-3}$ & &[0.994589, 0.999917] & 2,805,634 & 165,404\\ [0.5ex]
&1 & $10^{-3}$ & $8.1\cdot10^{-4}$ & &[0.999107, 0.999917] & 3,326,581 & 443,910\\ [0.5ex]
&1 & $10^{-4}$ & $5.5\cdot10^{-5}$ & &[0.999657, 0.999712] & 3,498,765 & 490,257\\ [0.5ex] 
\hline\hline 
Method &$k$ & $\zeta$ & $c$ & $P$ & Monte Carlo interval & $CPU_{seq}$ & $N$\\ [0.5ex] 
\hline
\multirow{2}{4em}{Monte Carlo}&1 & $5\cdot10^{-3}$ & 0.99 & 0.997266555 &[0.9945331, 1] & 58,069 & 92,104\\ [0.5ex]
&1 & $2.5\cdot10^{-3}$ & 0.99 & 0.99853 &[0.99706, 1] & 219,623 & 368,416\\ [0.5ex]
\hline
\end{tabular} 
\label{table:insulin-glucose} 
\end{adjustwidth}
\end{table} 
}

\hide{
Second, except for the insulin-glucose model 
(Table \ref{table:insulin-glucose}), ProbReach is orders of magnitude faster than Monte Carlo, 
while providing shorter and fully verified probability intervals. The insulin-glucose
model is a demanding benchmark, and while ProbReach clearly loses for $\epsilon = 10^{-2}$, we can
see from Table \ref{table:insulin-glucose} that obtaining a Monte Carlo confidence interval of size
$5.5\times 10^{-5}$ would require about $58,069 \times 10,000 = 580,690,000$ seconds. (With the
Chernoff bound a 10-fold reduction of $\zeta$ increases the sample size 100-fold.) ProbReach
computed such an interval in about 3,500,000 seconds.
}

\vspace{-1ex}
\section{Conclusions and Future Work}
We have given a formal definition of the bounded probabilistic $\delta$-reachability problem
for hybrid systems with continuous random and nondeterministic initial parameters. We have combined
validated integration with $\delta$-complete decision procedures for solving the 
probabilistic $\delta$-reachability problem. Our technique computes a {\em numerically guaranteed}
enclosure for the probabilities that the system reaches the unsafe region in a finite number of 
discrete transitions. For systems with continuous random (but no nondeterministic) parameters,
such enclosure can be made arbitrarily small.
We have implemented our technique in the open source tool \preach\  and have applied
it to a number of case studies featuring highly nonlinear ODEs, unbounded continuous random 
parameters and nondeterministic parameters. We have validated our results against Monte Carlo
simulation, and the comparison supports the correctness of our approach. 
\hide{
Also, our technique can produce faster and accurate results, returning intervals 
of size $10^{-9}$ even when computing very small probabilities (see, \eg,
the results for the bouncing ball model). This is a significant advantage with respect 
to Monte Carlo verification techniques such as statistical model checking
\cite{YounesS06}. It is well-known that Monte Carlo methods suffer from the rare-event 
problem: to estimate reliably very small probabilities, 
extremely large sample sizes (\ie, system simulations) are needed \cite{RRbook,Z:hscc12}.
}

Our work shows that it is possible to verify bounded reachability for hybrid systems 
featuring continuous random and nondeterministic parameters with the same level of accuracy as
for finite-state stochastic systems. Of course, more work needs to be done in terms of improving both 
the tool engineering and the theory. With respect to the former, a more efficient parallel strategy
needs to be implemented, and more experiments need to be performed to assess better the tool scalability. 
For the theory, in the future we plan to tackle a larger class of hybrid systems, which in particular
include state-dependent probabilistic jumps and continuous probabilistic dynamics (stochastic 
differential equations).
\hide{
A first extension is to allow probabilistic jumps in the system (discrete) dynamics. Furthermore, we plan 
to tackle continuous random dynamics such as, \eg, stochastic differential equations (SDEs). These are 
used, \eg, for modelling white noise in control systems, and are therefore important for modelling
cyber-physical systems. Thus, solving bounded $\delta$-reachability in general stochastic 
hybrid systems requires building a $\delta$-complete decision procedure for SDEs, which would in
turn require a validated SDE solver --- this is a very challenging problem.
Another interesting direction of research would be combining semi-exhaustive methods to search
the state space (\eg, quasi-Monte Carlo methods) with our approach. As the number of boxes to analyse 
grows exponentially with the number of initial parameters, any technique that would accelerate 
convergence is naturally of great interest.
}

\hide{
We believe that our technique can be fruitfully utilised for model 
checking \cite{DBLP:conf/lop/ClarkeE81} actual implementations of probabilistic algorithms. 
In particular, by adopting a similar approach as with bounded model checking for 
C code \cite{ckl2004}, temporal logic properties over programs would be transformed into
SMT formulae by `unrolling' the program source. Of course, these programs should satisfy suitable 
conditions such as, \eg, having finite loops only. We think that our approach would be useful for
verifying embedded software implementing nonlinear control laws in (continuous) environments
described by complex ODEs.
}
\hide{\vspace{-2ex}
\paragraph{Acknowledgement}
This work has been supported by award N00014-13-1-0090 of the US Office of Naval Research.}

\bibliographystyle{splncs03}
\vspace{-3ex}
\bibliography{../refs}

\newpage
\appendix
\section*{Appendix}
\section{Proofs}\label{apndx:proofs}

\begin{proof}[Proposition \ref{prop:BBorel}]
Immediate from the fact that (Definition \ref{def:HS}) the sets defined
by $\text{flow}_q, \text{init}_q, \text{jump}_q$, and $\text{unsafe}_q$ are Borel, and conjunction and
disjunctions correspond to set intersection and union, respectively.
\qed
\end{proof}

\begin{proof}[Proposition \ref{prop:IntegrationComplexity}]
It was proven in \cite[Corollary 6.3]{kobook} that the complexity of computing derivatives of 
$f\in P_{C^5 [a,b]}$ is P. Thus, computing a partial sum on an interval $[x]$ and 
evaluating the formula
\begin{equation*}
width([K]([x])) > \epsilon \frac{width([x])}{b - a}
\end{equation*}
is also polynomial in time. Given an arbitrary partition containing $n$ intervals, the formula above
can thus be verified in polynomial time with respect to the size of the partition. Hence, obtaining a partition such that on each interval the formula above holds is in NP complexity class.
\qed
\end{proof}

We recall that $P_{C^n[a,b]}$ denotes the class of polynomial-time (Type 2) computable functions
whose derivative $f^{(n)}$ exists and is continuous over $[a,b]$. Ko \cite[Section 6.2]{kobook}
showed that if $f$ is also analytic, then integration becomes P. However, such an algorithm
essentially uses truncated Taylor series over an arbitrary partition. Instead, our
Algorithm \ref{alg:integration} adaptively searches for a partition that guarantees the
required error bound $\epsilon$, while having a minimal number of intervals. In practice, this
significantly benefits the performance of our whole implementation. Another advantage of
Algorithm \ref{alg:integration} is that it does not require $f$ to be analytic.

\begin{proof}[Lemma \ref{lemma:converge}]
It was proven in \cite{DBLP:conf/lics/GaoAC12} that satisfiability of a first-order formula implies satisfiability of its weakening. Therefore, following can be equivalently derived:
\begin{equation*}
\phi \rightarrow \phi^{\delta'} \Leftrightarrow \neg \phi \vee \phi^{\delta'} \Leftrightarrow \phi^{\delta'} \vee \neg \phi \Leftrightarrow \neg \phi^{\delta'} \rightarrow \neg \phi
\end{equation*}

Let now $\psi = \phi^{\delta'}$ and $\psi^{\delta^{*}} = \phi^{\delta}$ be weakening of $\psi$. It was proven that if the weakening of the formula is unsatisfiable then the formula is also unsatisfiable. Then:
\begin{equation*}
\neg \psi^{\delta^{*}} \rightarrow \neg \psi \Leftrightarrow \neg \phi^{\delta} \rightarrow \neg \phi^{\delta'}
\end{equation*}
\qed
\end{proof}

\begin{proof}[Proposition \ref{prop:converge}]
By the definition of the decision procedure both formulas can be $\delta$-sat on an interval if and only if the considered interval contains values from the Borel set $B$ and its complement, or when a {\em false alarm} occurs. Then it can be concluded that the initial interval contains a subinterval which is either in the Borel set $B$ or outside it. This can be stated as:
\begin{equation} \label{eq:exists-interval}
\begin{split}
&\exists \delta \in \mathbb{Q}^{+}:(\phi([u, v])-\delta\textbf{-sat}) \wedge (\phi^{C}([u, v])-\delta\textbf{-sat}) \Rightarrow \\ 
&\exists [u', v'] \subseteq [u, v] : ([u', v'] \cap B = [u', v']) \oplus ([u', v'] \cap B = \emptyset)
\end{split}
\end{equation}

Then applying the decision procedure to $[u', v']$ and decreasing $\delta$, it is guaranteed that eventually we will obtain such a $\delta$ that the weakening of the formula will be false. In other words:
\begin{equation} \label{eq:exists-delta-unsat}
\exists \delta \in \mathbb{Q}^{+}: (\neg (\phi([u', v'])-\delta\textbf{-sat})) \oplus (\neg (\phi^{C}([u', v'])-\delta\textbf{-sat}))
\end{equation}

Therefore, by Lemma \ref{lemma:converge} the decision procedure will return \textbf{unsat} for one of the formulas in (\ref{eq:exists-delta-unsat}):
\begin{equation} \label{eq:exists-unsat}
\begin{split}
&\exists \delta \in \mathbb{Q}^{+}: (\neg (\phi([u', v'])-\delta\textbf{-sat})) \oplus (\neg (\phi^{C}([u', v'])-\delta\textbf{-sat})) \Rightarrow \\
&(\phi([u', v'])-\textbf{unsat}) \oplus (\phi^{C}([u', v'])-\textbf{unsat})
\end{split}
\end{equation}
\qed
\end{proof}

\begin{proof}[Proposition \ref{prop:multrv}]

If a hybrid system has $l$ independent initial random parameters with bounded support, then 
the reachability probability can be computed as:
\begin{equation} \label{eq:mulvarind}
  \int_{B} \prod_{i = 1}^{l} dP_i(r_i) = \int_{\Omega} I_{B}(r_{1}, ..., r_{l}) \prod_{i = 1}^{l} dP_i(r_{i})
\end{equation}
where $P_i$ is the probability measure of the $i$-th random parameter $r_i$, $B$ is the 
Borel set (\ref{def:Borel-set}) that contains all the random parameters values for which 
the hybrid system reaches the unsafe region in $k$ steps, $\Omega$ is the domain of the random parameters,
and $I_{B}(r_{1}, ..., r_{l})$ is the indicator function.

In order to compute (\ref{eq:mulvarind}) with precision $\epsilon_{prod}$, we must be able to compute 
\begin{equation} \label{eq:mulvarintegral}
\int_{\Omega} \prod_{i = 1}^{l} dP_i(r_{i})
\end{equation}
with the same precision. By Fubini's theorem, integral (\ref{eq:mulvarintegral}) can be calculated 
as the product 
\[
\int_{\Omega} \prod_{i = 1}^{l} dP_i(r_{i}) = \prod_{i = 1}^{l} \int_{a_{i}}^{b_{i}} dP_i(r_{i}) 
= \prod_{i = 1}^{l} I_{i}
\]
where 
\[
 I_i = \int_{a_{i}}^{b_{i}} dP_i(r_{i}) 
\]
and $a_i, b_i$ are the domain bounds of random parameter $r_i$.

Now, we can compute an interval of length $\epsilon_{i}$ containing the {\em exact} value of each 
integral $I_i$, and let us denote such interval as $[\hat{I}_i, \hat{I}_i +\epsilon_i]$. 
It is thus sufficient to demonstrate how the values $\epsilon_{i}$'s should be chosen 
in order for the integral (\ref{eq:mulvarintegral}) to be contained in an interval of length 
$\epsilon_{prod}$.

According to the rules of interval arithmetics, product of the intervals is contained in the interval:
\begin{equation}\label{eq:intervalprod}
[\hat{I_{1}}, \hat{I_{1}} + \epsilon_{1}] \cdot [\hat{I_{2}}, \hat{I_{2}} + \epsilon_{2}] \cdots 
[\hat{I_{l}}, \hat{I_{l}} + \epsilon_{l}] 
\ \subseteq\  [\prod_{i = 1}^{l} \hat{I_{i}}, \prod_{i = 1}^{l} (\hat{I_{i}} + \epsilon_{i})]
\end{equation}

Therefore, the $\epsilon_{i}$'s should be chosen such that the interval at the RHS 
of inclusion (\ref{eq:intervalprod}) has length smaller than $\epsilon_{prod}$, \ie, the following should hold:
\begin{equation} \label{eq:mulvarinequality}
\prod_{i = 1}^{l} (\hat{I_{i}} + \epsilon_{i}) - \prod_{i = 1}^{l} \hat{I_{i}} \leq \epsilon_{prod} 
\end{equation}

Therefore, choosing $\epsilon_{i}$ in such a way that (\ref{eq:mulvarinequality}) holds will guarantee that the {\em exact} value of the product of $l$ integrals is contained in the interval of size $\epsilon_{prod}$. 
If we want all the $\epsilon_{i}$'s equal to a single value $\epsilon$, then formula 
(\ref{eq:mulvarinequality}) can be satisfied by assuming in the worst case $\hat{I}_i = 1$ for all $i$,
which gives
\[
\epsilon_{prod} \geq \prod_{i=1}^l (1+\epsilon) -1 =  \sum_{i=1}^l {{l}\choose{i}} \epsilon^i
\]
where ${{l}\choose{i}}$ is the binomial coefficient.

\qed
\end{proof}

\begin{proposition}\label{prop:bounded}
Given $\epsilon \in (0, 1] \cap \mathbb{Q}$, $k\in\mathbb{N}$ and a hybrid system with one
\textbf{bounded} continuous random initial parameter, there exists an algorithm for computing an interval
of size not larger than $\epsilon$ that contains the value of (\ref{eq:probabilistic-reachability}),
\ie, the probability of reaching the unsafe region in $k$ steps.
\end{proposition}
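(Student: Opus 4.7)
The plan is to combine the validated integration procedure of Section~\ref{sec:vip} with the $\delta$-complete evaluation of formulas $\phi$ and $\phi^C$ in a refinement loop, controlling two independent sources of error. First I would split the budget as $\epsilon = \epsilon_{int}+\epsilon_{prob}$ with $\epsilon_{int},\epsilon_{prob}\in\mathbb{Q}^{+}$ (e.g.\ each equal to $\epsilon/2$). I would then invoke Algorithm~\ref{alg:integration} on the probability density $f$ of the (single) bounded random parameter with precision $\epsilon_{int}$; this returns a partition $T=\{[x]_1,\dots,[x]_n\}$ of the support together with interval enclosures $[K]([x]_i)$ whose widths satisfy $width([K]([x]_i)) \le \epsilon_{int}\, width([x]_i)/(b-a)$, so that summing them over the whole support encloses $\int_a^b f(x)\,dx = 1$ in an interval of width at most $\epsilon_{int}$.

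Next, for each $[x]_i$ I would query the $\delta$-complete decision procedure on $\phi([x]_i)$ and $\phi^C([x]_i)$ (formulas~(\ref{eq:phi}) and~(\ref{eq:phi_C})). The four outcomes already described let us classify $[x]_i$: an \textbf{unsat} for $\phi$ means $[x]_i\cap B=\emptyset$, an \textbf{unsat} for $\phi^C$ means $[x]_i\subseteq B$, and in both cases the corresponding enclosure $[K]([x]_i)$ is added respectively to the upper or lower probability bound. When both formulas return $\delta$-\textbf{sat}, I mark the box as \emph{uncertain} and put it back on the worklist. Maintaining cumulative intervals $[P_{lower}]$ and $[P_{upper}]$ initialised to $[0,0]$ and $[1,1]$ respectively, after one sweep the true probability lies in $[\underline{P_{lower}},\overline{P_{upper}}]$, whose width is bounded by $\epsilon_{int}$ plus the sum of the widths of the enclosures of the uncertain boxes.

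The iterative step processes uncertain boxes by bisecting them and re-querying the decision procedure on the halves; each half is itself a sub-interval of the original partition, so its validated sub-integral width halves (up to the controlled integration error). By Proposition~\ref{prop:converge}, for any uncertain interval $[u,v]$ there exists a subinterval $[u',v']\subseteq[u,v]$ on which one of $\phi,\phi^C$ becomes \textbf{unsat}, hence bisection eventually succeeds in classifying every descendant box, and the total measure of still-uncertain boxes can be driven below any positive threshold. I would therefore terminate when $\overline{P_{upper}}-\underline{P_{lower}}\le\epsilon$, which must occur after finitely many refinements, and return $[\underline{P_{lower}},\overline{P_{upper}}]$.

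The main obstacle is the termination and correctness argument for the refinement loop: we must rule out the situation where some interval remains uncertain through arbitrarily many bisections. This is where Proposition~\ref{prop:converge} is essential --- it guarantees that persistent $\delta$-\textbf{sat}/$\delta$-\textbf{sat} answers on a shrinking chain of intervals cannot occur, because along the chain a sub-interval must appear on which one of the two formulas is actually unsatisfiable and is therefore reported as \textbf{unsat} by the $\delta$-complete procedure (for suitably small $\delta$, which the procedure itself can tighten). Combined with the fact that the sum of validated sub-integrals over all boxes is bounded by $1+\epsilon_{int}$ and that Algorithm~\ref{alg:integration} already guarantees the integration error contribution is at most $\epsilon_{int}$, the final enclosure has width at most $\epsilon_{int}+\epsilon_{prob}=\epsilon$, completing the proof. \qed
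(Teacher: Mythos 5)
Your proposal is correct and follows essentially the same route as the paper's proof: run Algorithm~\ref{alg:integration} to get a validated partition of the support, classify each interval by querying $\phi$ and $\phi^C$ with the $\delta$-complete procedure, observe that $P_{upper}-P_{lower}$ equals (up to integration error) the probability mass of the still-uncertain intervals, and invoke Proposition~\ref{prop:converge} to argue that refinement drives that mass below the target. One minor wording slip: after a sweep the enclosure width is the integration error plus the \emph{sum of the enclosures} (i.e., the probability mass) of the uncertain boxes, not the sum of their widths --- but your later statement that the total measure of uncertain boxes is driven below any threshold shows you intend the correct quantity.
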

\begin{proof}
Let $r \in [a, b]$ be a random continuous parameter. Then by applying our validated integration procedure
(Algorithm \ref{alg:integration}) we obtain a partition $\cup_{i = 1}^{n} [r]_{i}$ such that on each of the intervals the value of the partial sum is enclosed by an interval of length $\epsilon \frac{width([r]_{i})}{width([a, b])}$, and the value of the integral on $[a, b]$ is enclosed by the interval of size $\epsilon$.

Let $k$-th step reachability be encoded by the formula $\phi$, and $\phi^{C}$ be derived as in (\ref{eq:phi_C}).
By applying the decision procedure to all the intervals from the initial partition, we can distributed them in three sets $B_{unsat}, B_{C^{unsat}}, B_{\delta-sat}$ containing the intervals where $\phi$ is \textbf{unsat}, $\phi^{C}$ is \textbf{unsat}, and both formulas are $\delta$\textbf{-sat}, respectively. Then the following will hold:
\begin{equation} \label{eq:a-b-integral}
\int_{a}^{b}\, f(r)\, dr = \int_{B_{unsat}}\, f(r)\, dr + \int_{B_{C^{unsat}}}\, f(r)\, dr + \int_{B_{\delta-sat}}\, f(r)\, dr
\end{equation}
The lower and the upper bounds of the interval containing the exact value of the probability can be found as:
\[
\begin{split}
&P_{lower} = \int_{B_{C^{unsat}}}\, f(r)\, dr \\
&P_{upper} = \int_{a}^{b}\, f(r)\, dr - \int_{B_{unsat}}\, f(r)\, dr
\end{split}
\]
Then size of the interval $[P_{lower}, P_{upper}]$ can be calculated as:
\[
\begin{split}
&P_{upper} - P_{lower} = \int_{a}^{b}\, f(r)\, dr - \int_{B_{unsat}}\, f(r)\, dr - \int_{B_{C^{unsat}}}\, f(r)\, dr = \\
&= \int_{B_{\delta-sat}}\, f(r)\, dr
\end{split}
\]

By Proposition \ref{prop:converge} it follows that on each interval $[u, v]$ in $B_{\delta-sat}$ we can obtain a subinterval $[u', v']$ such that it can be added to $B_{unsat}$ or $B_{C^{unsat}}$ and, thus, removed from $B_{\delta-sat}$. Therefore, as $\delta \rightarrow 0$ and $n \rightarrow \infty$ (where $n$ is the number of disjoint subintervals partitioning $B_{\delta-sat}$) the size of set $B_{\delta-sat}$ will be decreasing. Hence, we can conclude that $\int_{B_{\delta-sat}}\, f(r)\, dr \rightarrow 0$, which implies:
\[
\exists \epsilon \in \mathbb{Q}^{+}: P_{upper} - P_{lower} \le \epsilon
\]
\qed
\end{proof}

\begin{proposition} \label{prop:unbounded}
Given $\epsilon \in (0, 1] \cap \mathbb{Q}$, $k\in\mathbb{N}$ and a hybrid system with one
\textbf{unbounded} continuous random initial parameter, there exists an algorithm for computing
an interval of size not larger than $\epsilon$ that contains the value
of (\ref{eq:probabilistic-reachability}).
\end{proposition}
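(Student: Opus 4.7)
The plan is to reduce the unbounded case to the bounded case of Proposition \ref{prop:bounded} by truncating the support of the random parameter to a sufficiently large bounded interval, and then budgeting the target precision $\epsilon$ between the truncation error and the integration error on the truncated domain.

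More precisely, I would split $\epsilon = \epsilon_{inf} + \epsilon_{prob}$ with both quantities positive and rational (for instance $\epsilon_{inf} = \epsilon_{prob} = \epsilon/2$). The first step is to compute finite bounds $a, b \in \mathbb{Q}$ such that
\begin{equation} \label{eq:epsilon-inf}
\int_{a}^{b} f(r)\, dr \ \geq\  1 - \epsilon_{inf},
\end{equation}
where $f$ is the density of the continuous random parameter. Since $f$ is a density, $\int_{-\infty}^{+\infty} f = 1$, so such $a, b$ exist. The key observation is that \eqref{eq:epsilon-inf} can be cast as a bounded $\Sigma_1$ sentence over the reals (using the cumulative distribution function, which is a polynomial-time computable integrand), and hence solved by a $\delta$-complete decision procedure; iterating the procedure with expanding candidate intervals and shrinking $\delta$ is guaranteed to find a suitable $[a,b]$ by Lemma \ref{lemma:converge}.

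Once $[a,b]$ has been obtained, I would apply Proposition \ref{prop:bounded} to the random parameter \emph{conditioned} to lie in $[a,b]$ (equivalently, integrating the original density $f$ over $[a,b]$) with target precision $\epsilon_{prob}$, obtaining an interval $[P_{lower}, P_{upper}]$ of width at most $\epsilon_{prob}$ such that
\[
\int_{B \cap ([a,b]\times\Lambda')} dP \ \in\ [P_{lower}, P_{upper}],
\]
where $\Lambda'$ absorbs the remaining parameters (here trivial, since we have a single random parameter). The mass lying in the tail region $\mathbb{R}\setminus [a,b]$ is at most $\epsilon_{inf}$ and at least $0$, so the true reachability probability lies in $[P_{lower},\, P_{upper} + \epsilon_{inf}]$, which is an interval of width at most $\epsilon_{prob} + \epsilon_{inf} = \epsilon$.

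The main obstacle is the first step: formally justifying that finite bounds satisfying \eqref{eq:epsilon-inf} can be computed by a $\delta$-complete procedure. This requires that the CDF of the random parameter be expressible (or approximable) within the language handled by the decision procedure --- which is the case for the standard distributions used in practice (normal, exponential, etc.) via their analytic expressions or via uniformly convergent series. Everything else is a straightforward combination of Proposition \ref{prop:bounded} with interval-arithmetic bookkeeping of the tail mass. \qed
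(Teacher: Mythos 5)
Your proposal is correct and follows essentially the same route as the paper's proof: split $\epsilon$ into a tail budget $\epsilon_{inf}$ and an integration budget $\epsilon_{prob}$, obtain truncation bounds $a,b$ by posing the CDF condition as a first-order formula for a $\delta$-complete procedure (enlarging the candidate ranges until it is satisfiable), invoke Proposition \ref{prop:bounded} on $[a,b]$, and absorb the unaccounted tail mass of at most $\epsilon_{inf}$ into the upper endpoint. The only cosmetic difference is that the paper integrates the original (unnormalised) density over $[a,b]$ rather than conditioning, which your parenthetical already identifies as the intended reading.
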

\begin{proof}
Let us recall that calculating the probability of reaching the unsafe region requires integrating an
indicator function with respect the probability measure associated to the random parameter
\begin{equation} \label{eq:probability}
\int_{\Omega}\, I_{B}(r)\, dP(r)
\end{equation}
where $I_{B}(r)$ is the indicator function over set $B$ (\ref{def:Borel-set}), $P$ is the probability 
measure of the random variable, and $\Omega = (-\infty, +\infty)$. In the following we shall simplify
notation and write $dP$ instead of $dP(r)$, since there is only one random variable.

The next inequality can be readily derived from the definition of indicator function:
\[
0 \le \int_{\Omega}\, I_{B}(r)\, dP \le \int_{\Omega}\, dP .
\]
By the property of definite integral, for any $a$ and $b \ge a$:
\[
\int_{\Omega}\, I_{B}(r)\, dP = \int_{a}^{b}\, I_{B}(r)\, dP + \int_{-\infty}^{a}\, I_{B}(r)\, dP + \int_{b}^{\infty}\, I_{B}(r)\, dP .
\]
As $\int_{-\infty}^{a}\, I_{B}(r)\, dP \ge 0$ and $\int_{b}^{\infty}\, I_{B}(r)\, dP \ge 0$, the following 
holds for all $r \in \Omega$
\[
\int_{a}^{b}\, I_{B}(r)\, dP \le \int_{\Omega}\, I_{B}(r)\, dP \le \int_{a}^{b}\, I_{B}(r)\, dP + 1 - \int_{a}^{b}\, dP .
\]
Therefore, the exact value of probability is enclosed by the interval:
\[
\int_{\Omega}\, I_{B}(r)\, dP  \in [\int_{a}^{b}\, I_{B}(r)\, dP, \int_{a}^{b}\, I_{B}(r)\, dP + 1 - \int_{a}^{b}\, dP]
\]

By Proposition \ref{prop:bounded} we can calculate the lower and the upper bounds of the probability 
over the bounded interval $[a, b]$:
\begin{equation} \label{eq:exact-probability}
\int_{\Omega}\, I_{B}(r)\, dP  \in [(\int_{a}^{b}\, I_{B}(r)\, dP)_{lower}, (\int_{a}^{b}\, I_{B}(r)\, dP)_{upper} + 1 - \int_{a}^{b}\, dP]
\end{equation}

Now it is desired that the interval in formula (\ref{eq:exact-probability}) is of length $\epsilon$. For this the error $\epsilon$ can be presented as a sum of two components $\epsilon_{inf}$ and $\epsilon_{prob}$ that are chosen such that: $\epsilon \ge \epsilon_{inf} + \epsilon_{prob}$ where $\epsilon_{inf} \ge 1 - \int_{a}^{b}\, dP$ and $\epsilon_{prob} \ge (\int_{a}^{b}\, I_{B}(r)\, dP)_{upper} - (\int_{a}^{b}\, I_{B}(r)\, dP)_{lower}$.

The values $a$ and $b$ can be obtained by solving the first inequality as a first order formula:
\begin{equation} \label{eq:epsilon-inf}
\exists a \in [u_{a}, v_{a}], \exists b \in [u_{b}, v_{b}] : (\frac{dF}{dx} = f(x)) \wedge (F(a) = 0) \wedge (F(b) \ge 1 - \epsilon_{inf})
\end{equation}
where $f$ is the probability density function of the random parameter, which is known to the user. (Note that
$F$ thus denotes the cumulative distribution function of the random parameter.)
Then the values $a$ and $b$ derived from formula (\ref{eq:epsilon-inf}) are used to compute the interval $[(\int_{a}^{b}\, I_{B}(r)\, dP)_{lower}, (\int_{a}^{b}\, I_{B}(r)\, dP)_{upper}]$ of length $\epsilon_{prob}$. This can be performed for an arbitrary positive rational number (by Proposition \ref{prop:bounded}).

If formula (\ref{eq:epsilon-inf}) is unsatisfiable then it means that bounds for the variables $a$ and $b$ should be enlarged and the formula should be verified again. This process should repeat until the formula is satisfiable and the values $a$ and $b$ are obtained.

\qed
\end{proof}

\begin{proof}[Proposition \ref{prop:cn_nd}]
Evaluating formulas $\phi$ and $\phi^C$ on two boxes $[\overline{r}]$ and $[\overline{z}]$ (over random and nondeterministic continuous parameters, respectively) there are four possible outcomes:
\begin{itemize}
        \item{$\phi([\overline{r}], [\overline{z}])$ is \textbf{unsat}. Hence, there are {\em for sure} no values in $[\overline{r}]$ and $[\overline{z}]$ such that the system reaches the unsafe region, so $[\overline{r}]$ is not in $B$.}
        \item{$\phi([\overline{r}], [\overline{z}])$ is \textbf{$\delta$-sat}. Then, there is a value in $[\overline{r}], [\overline{z}]$ such that the system reaches $U$ or $U^{\delta}$ ($\delta$-weakening of set $U$).}
        \item{$\phi^C([\overline{r}], [\overline{z}])$ is \textbf{unsat}. Therefore, there is {\em for sure} no value in $[\overline{r}]$ and $[\overline{z}]$ such that for all time points on the $k$-th step the system stays within the complement of the unsafe region. In other words, for all the values in $[\textbf{r}]$ the system    reaches the unsafe region, so $[\textbf{r}]$ is fully contained in $B$.}
        \item{$\phi^C([\overline{r}], [\overline{z}])$ is \textbf{$\delta$-sat}. Then there is a value in $[\overline{r}], [\overline{z}]$ such that the system stays within $U^{C}$ or $U^{C^{\delta}}$.}
\end{itemize}
Similarly to the approach used in the proof of Proposition \ref{prop:bounded}, lower and upper bounds of 
the reachability probability can be calculated as:
\[
\begin{split}
&P_{lower} = \int_{B_{C^{unsat}}}\, f(r)\, dr \\
&P_{upper} = \int_{a}^{b}\, f(r)\, dr - \int_{B_{unsat}}\, f(r)\, dr
\end{split}
\]
where $B_{unsat}, B_{C^{unsat}}, B_{\delta-sat}$ containing the boxes where $\phi$ is \textbf{unsat}, $\phi^{C}$ is \textbf{unsat}, and both formulas are $\delta$\textbf{-sat} respectively.
Hence, by refining boxes from $B_{\delta-sat}$ until $max(|[\textbf{r}]|) \le \epsilon$, 
we obtain an interval $[P_{lower}, P_{upper}]$ containing the range of probabilities of reaching 
the unsafe region.
\qed
\end{proof}

\begin{proof}[Theorem \ref{thm:dd}]
Let $\phi$ be a formula describing a hybrid system with discrete random parameters 
$\{{D}_{1}, \ldots, {D}_{p}\}$, and let $p(\cdot)$ denotes their probabilities, \ie,
for each $D_{i}$ we have that $ \sum_{j=1}^{\#D_{i}} p(d_{ij}) = 1$.

Let $\textbf{DD} = {D}_{1} \times \cdots \times {D}_{p}$ be the Cartesian product of discrete parameters.
For each $\textbf{dd} = \{d_{11}, d_{22}, ..., d_{pk}\} \in \textbf{DD}$, 
let $m_{\textbf{dd}} = p(d_{11}) \cdot p(d_{21}) \cdot \cdots \cdot p(d_{pk})$.
Substituting all discrete random parameters with their values from \textbf{dd} we will obtain a hybrid system which can be described by a corresponding formula $\phi_{\textbf{dd}}$.

Now depending of the type of the considered hybrid system we can use one of the algorithms already presented.
\begin{itemize}
        \item{\textbf{PHA} or \textbf{NPHA}: we can apply Algorithm \ref{alg:prob-reach} and obtain a probability interval $[\underline{[P_{lower}]}, \overline{[P_{upper}]}]_{\textbf{dd}}$.}
        \item{\textbf{HA}: we can just use the decision procedure described above and evaluate $\phi_{\textbf{dd}}$ and $\phi^{C}_{\textbf{dd}}$. Then returned value depends on the evaluation outcome:
                \begin{itemize}
                        \item{$\phi_{\textbf{dd}}$-\textbf{unsat} return $[\underline{[P_{lower}]}, \overline{[P_{upper}]}]_{\textbf{dd}} = [0.0, 0.0]$}
                        \item{$\phi^{C}_{\textbf{dd}}$-\textbf{unsat} return $[\underline{[P_{lower}]}, \overline{[P_{upper}]}]_{\textbf{dd}} = [1.0, 1.0]$}
                        \item{$\phi_{\textbf{dd}}$-\textbf{$\delta-sat$} and $\phi^{C}_{\textbf{dd}}$-\textbf{$\delta-sat$} return $[\underline{[P_{lower}]}, \overline{[P_{upper}]}]_{\textbf{dd}} = [0.0, 1.0]$}
                \end{itemize}}
\end{itemize}

Doing so for each $\textbf{dd} \in \textbf{DD}$ we can obtain the resulting probability interval
\begin{equation*}
[P] = \sum_{\textbf{dd} \in \textbf{DD}}(m_{\textbf{dd}} \cdot [\underline{[P_{lower}]}, \overline{[P_{upper}]}]_{\textbf{dd}})
\end{equation*}
\qed
\end{proof}

\begin{proof}[Theorem \ref{thm:complexity}]
The considered algorithm can be presented as two independent components: validated integration and probability calculation.

The decision procedure used in the algorithm consists of two formulas: $\phi$ and $\phi^{C}$,
which are $\Sigma_{1}$ and $\Sigma_{2}$ sentences. Solving these formulas as a $\delta$-SMT problem
is in $(\Sigma_{1}^{P})^{C}$ and $(\Sigma_{2}^{P})^{C}$ complexity classes respectively,
where $P \subseteq C \subseteq PSPACE$ is the complexity of the terms in the 
formula \cite{DBLP:conf/lics/GaoAC12}. 
Hence, the decision procedure is in $(\Sigma_{2}^{P})^{C}$.
Then verification of an arbitrary partition of $n$ intervals is also in $(\Sigma_{2}^{P})^{C}$.
Hence, it is clear that obtaining the {\em correct} partition (such that 
$\overline{[P_{upper}]} - \underline{[P_{lower}]} \le \epsilon_{prob}$) is in 
$NP^{(\Sigma_{2}^{P})^{C}}$.
By Proposition \ref{prop:IntegrationComplexity} the complexity of the verified integration is $NP$,
which is in $NP^{(\Sigma_{2}^{P})^{C}}$. The complexity of the whole algorithm is thus 
$NP^{(\Sigma_{2}^{P})^{C}}$, where $P \subseteq C \subseteq PSPACE$.

Finally, it has been shown in \cite{DBLP:conf/lics/GaoAC12} that if $\phi$ (and thus $\phi^C$) 
includes Lipschitz-continuous ODEs then the $\delta$-SMT problem becomes $PSPACE$-complete.
This lifts the complexity of the whole algorithm to $PSPACE$-complete.

\qed
\end{proof}

\section{Models}\label{apndx:Models}
We give here more information about the models used for our experiments.

\subsection{2D-moving Bouncing ball}
The ball is launched from position $(S_{x} \in [-5, 5], S_{y} = 0)$ with initial speed 
$\upsilon_0 \sim N(20,1)$, \ie, normal distribution with mean 20 and variance 1, and angle 
$\alpha$ to horizon (measured in radians) with the following probability distribution: 
\[
\begin{array}{l}
P[\alpha = 0.5236] = 0.3\\
P[\alpha = 0.7854] = 0.5\\
P[\alpha = 1.0472] = 0.2
\end{array}
\]
After each jump the speed of the ball is multiplied by 0.9. The gravity of Earth parameter
$g\in[9.8,9.81]$ is also nondeterministic. The system is modelled as a hybrid system with 
one mode with dynamics governed by a system of ODEs: 
\[
\begin{array}{l}
S'_{x}(t) = \upsilon_0 \cos{\alpha} \\[1ex]
S'_{y}(t) = \upsilon_0 \sin{\alpha} - gt
\end{array}
\]
The goal of the experiment is to calculate the probability of reaching the region 
$S_{x}(t) \ge 100$ within 0 and 1 jump. The results are presented in Table \ref{table:bouncing-ball}. 
Monte Carlo simulation of continuous nondeterminism in MATLAB was achieved as explained
in Section \ref{sec:Exp}, using uniform discretisation of the domains of the nondeterministic 
parameters ($S_{x}$ and $g$ were discretised with 100 and 10 values, respectively). 
In Figure \ref{fig:bb-graph0} and \ref{fig:bb-graph1} we plot the Monte Carlo reachability 
probability estimate with respect to the nondeterministic parameters $S_x(0)$ and $g$, for 0 and 1 jump,
respectively.
\begin{figure}[ht!] 
	\centering
	\includegraphics[width=120mm]{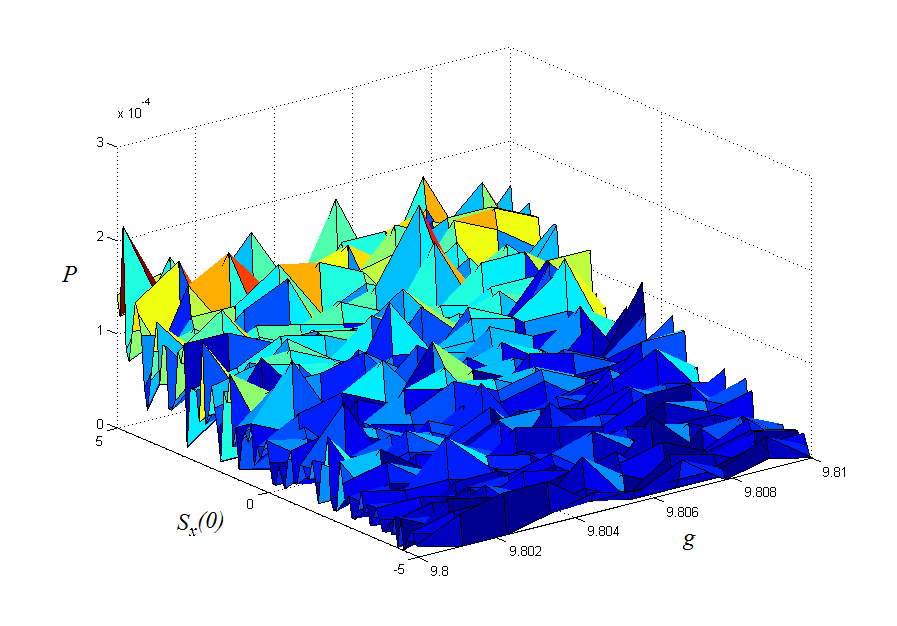}
	\caption{Monte Carlo simulation of the bouncing ball model: Reachability probability ($k=0$) 
	estimate $P$ with respect to nondeterministic parameters $S_{x}(0)\in [-5,5]$ and $g\in [9.8,9.81]$.}
	\label{fig:bb-graph0}
\end{figure}

\begin{figure}[ht!] 
	\centering
	\includegraphics[width=120mm]{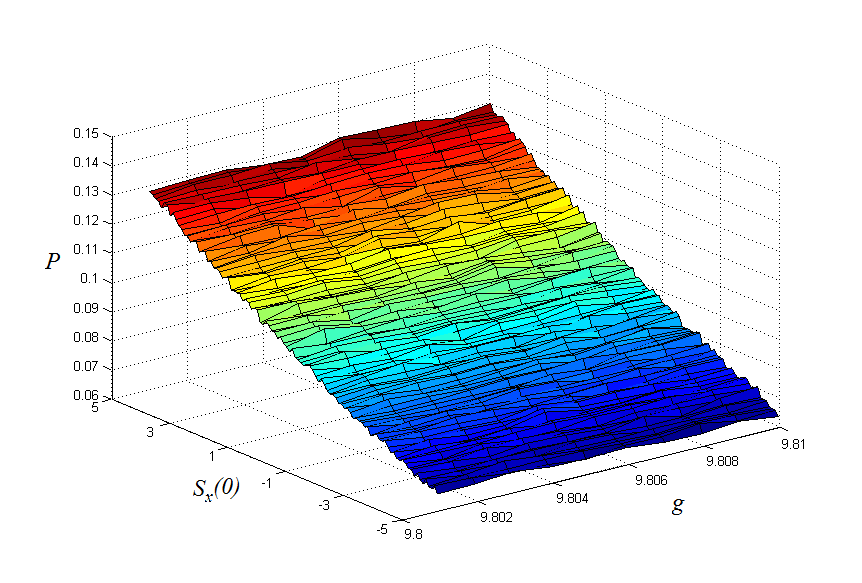}
	\caption{Monte Carlo simulation of the bouncing ball model: Reachability probability ($k=1$) 
	estimate $P$ with respect to nondeterministic parameters $S_{x}(0)\in [-5,5]$ and $g\in [9.8,9.81]$.}
	\label{fig:bb-graph1}
\end{figure}

\begin{figure}[ht!] 
	\centering
	\includegraphics[width=90mm]{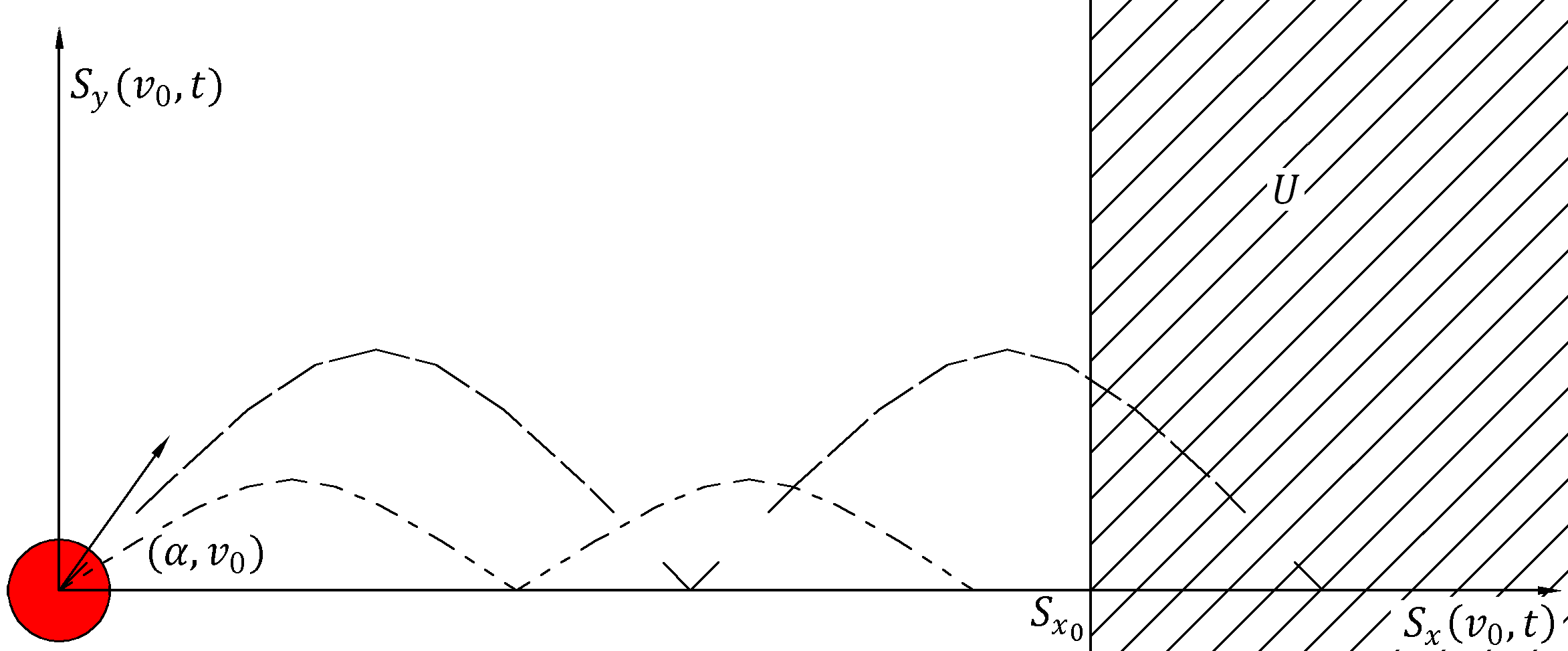}
	\caption{2D-moving bouncing ball scenario}
	\label{fig:bouncing-ball}
\end{figure}

\begin{table}[ht!]
\begin{adjustwidth}{-1in}{-1in}
\centering
\caption{2D-moving bouncing ball model; see legend in Section \ref{sec:Exp}.}
\vspace{-1ex}
\begin{tabular}{c c c c c c c r}
\hline
Method & Model type& $k$ & $\epsilon$ & $length$ & Probability interval & $CPU$ & \\ [0.5ex]
\hline
\multirow{2}{4em}{{\sffamily Prob Reach}} &\textbf{NPHA}& 0 & $10^{-3}$ & $2.38\cdot10^{-4}$ & [0.000013103, 0.000250681] & 223 & \\ [0.5ex]
&\textbf{NPHA}& 1 & $10^{-3}$ & $6.464\cdot10^{-2}$ &[0.0647381, 0.12937951] & 1,605 & \\ [0.5ex]
\hline \hline
Method & Model type & $k$ & $\zeta$ & $c$ & Monte Carlo interval & $CPU$ & $N$\\ [0.5ex]
\hline
\multirow{2}{4em}{Monte Carlo}&\textbf{NPHA}& 0 & $5\cdot 10^{-3}$ & 0.99 & [0, 0.00520629] & 1,482 & 92,104\\ [0.5ex]
&\textbf{NPHA}& 1 & $5\cdot10^{-3}$ & 0.99 & [0.0585, 0.1367] & 1,485 & 92,104 \\ [0.5ex]
\hline
\end{tabular}
\label{table:bouncing-ball}
\end{adjustwidth}
\end{table}

\hide{
\begin{table}[ht] 
\caption{Computing probabilistic reachability for the bouncing ball model}
\centering
\begin{tabular}{c c c c r}
\hline\hline \\ [0.5ex]
$k$ & $\delta$ & Probability interval & $CPU_{seq}$ & $CPU_{par}$\\ [0.5ex] 

\hline \\ [0.5ex]
0 & $10^{-9}$ & [8.2175709429082528e-05, 8.2176188377012998e-05] & 64 & 7\\ [0.5ex]
1 & $10^{-9}$  & [0.13794836312721623, 0.13794836411185463] & 192 & 29\\ [0.5ex] 
2 & $10^{-9}$  & [0.50868960502622063, 0.50868960601267332] & 927 & 164\\ [0.5ex] 
3 & $10^{-9}$  & [0.73876740050846257, 0.73876740135236485] & 3806 & 563\\ [0.5ex] 
\hline \\ [0.5ex]
\end{tabular} 

$k$ = number of discrete transitions, $\delta$ = size of probability interval, $CPU_{seq}$ = CPU time of sequential version of the algorithm in seconds, $CPU_{par}$ = CPU time of parallel version of the algorithm in seconds
\label{table:bouncing-ball} 
\end{table} 

\begin{table}[ht!] 
\caption{Monte Carlo simulations for the bouncing ball model}
\centering
\begin{tabular}{c c c r}
\hline\hline \\ [0.5ex]
$k$ & $P$ & $CI$ & $CPU$\\ [0.5ex] 

\hline \\ [0.5ex]
0 & 8.214014e-05 & [7.714014e-05, 8.714014e-05] & 14,394\\ [0.5ex]
1 & 0.137946 & [0.137941, 0.137951] & 16,531\\ [0.5ex] 
2 & 0.5086959 & [0.5086909, 0.5086959] & 18,052\\ [0.5ex] 
3 & 0.7387694 & [0.7387644, 0.7387744] & 17,530\\ [0.5ex] 
\hline \\ [0.5ex]
\end{tabular} 

$k$ = number of discrete transitions, $P$ = probability estimate, $CI$ = confidence interval. Sample size = 230,258,509,300.
\label{table:bouncing-ball-MATLAB} 
\end{table}
}

\hide{
\subsection{Controlled bouncing ball}
Consider a 2-mode hybrid system (Fig. \ref{fig:controlled-bouncing-ball}) modelling a controlled 
bouncing ball \cite{ADHS09}. In mode 1, a ball of mass $m = 7$ is dropped on a platform attached 
to a stiff spring and a damper from a random height $H_{0}$, which is distributed normally 
($\mu = 9$ and $\sigma = 1$). When the ball reaches the platform ($H = 0$) the system makes a 
transition to mode 2, where the ball is reflected from the platform and it jumps back to mode 1 
when the height of the ball is greater than 0.

\begin{figure}[ht!] 
\centering
\raisebox{-0.5\height}{\includegraphics[width=40mm]{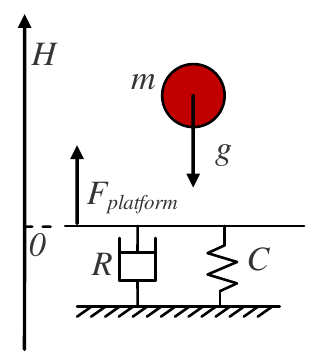}}
\hspace*{.2in}
\raisebox{-0.5\height}{\includegraphics[width=70mm]{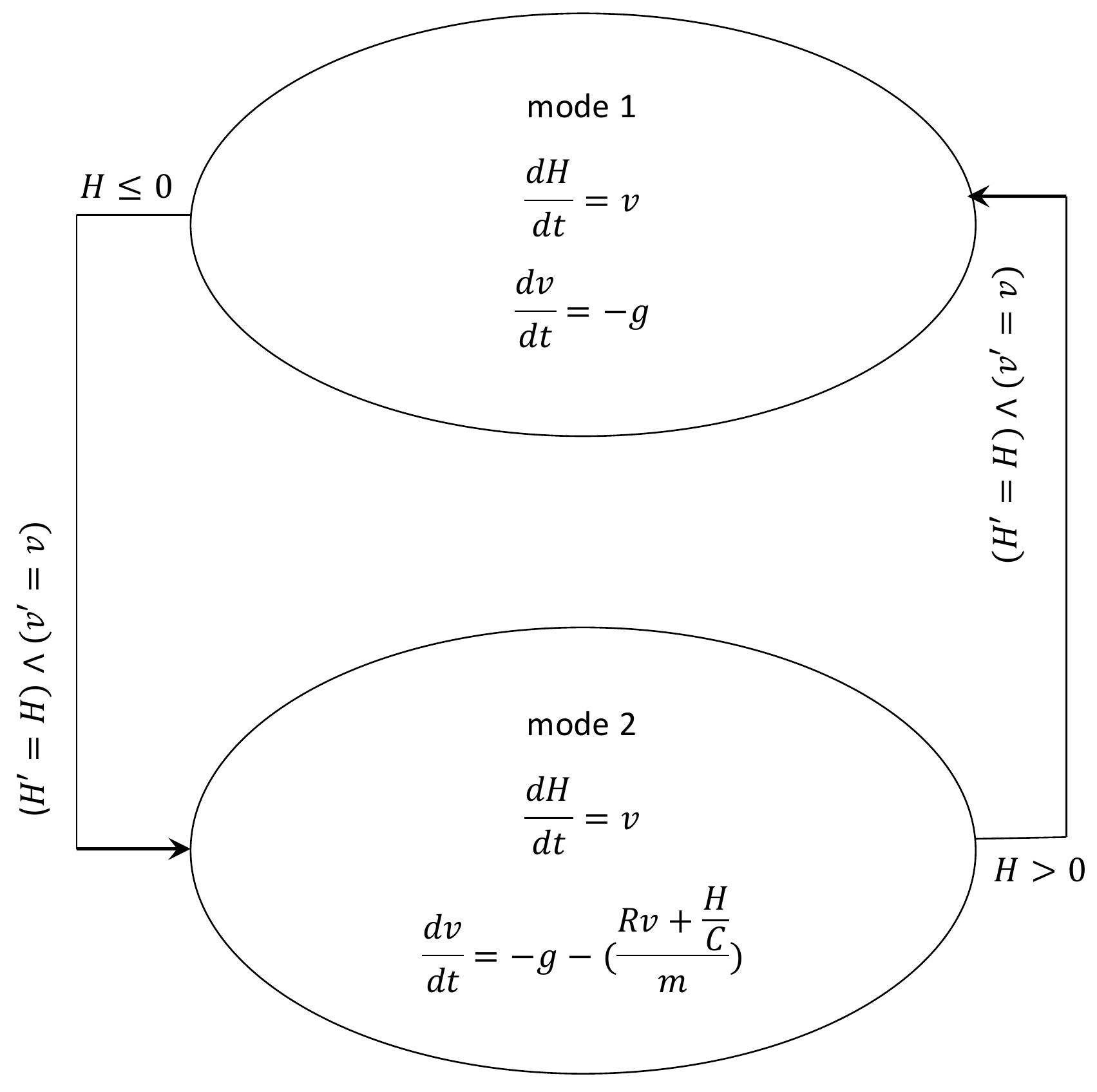}}
\caption{A figure (left hand side) and a model (right hand side) of a controlled bouncing ball with $R = 5$, $C = 0.0025$ and $g = 9.8$}
\label{fig:controlled-bouncing-ball}
\end{figure}

The goal of the experiment is to calculate the probability that the ball reaches 
the region $H >= 7$ in mode 1 after making one bounce. The results are presented 
in Table \ref{table:controlled-bouncing-ball}.

}

\subsection{Starvation model}
In humans, enduring fasting for 3-4 days will consume all the glucose reserves of the body.
At this point, the energy to sustain the human body is produced from fat $F(t)$, muscles $M(t)$ 
and ketone bodies $K(t)$ (for brain function) \cite{Starvation}. The ODE system below represents 
the dynamics of the described variables:
\begin{equation*}
\begin{split}
\frac{dF}{dt} &= F(\frac{- a}{1 + K} - \frac{1}{\lambda_{F}}  (\frac{C + g L_{0}}{F + M} + g)) \\
\frac{dM}{dt} &= - \frac{M}{\lambda_{M}}  (\frac{C + g L_{0}}{F + M} + g) \\
\frac{dK}{dt} &= \frac{V a F}{1 + K} - b
\end{split}
\end{equation*}
We consider two scenarios where parameter $g\sim N(10.96,1)$, \ie, normally distributed with
mean 10.96 and variance 1, and:
\begin{itemize}
	\item{$b \in [0.05, 0.075]$ is nondeterministic}; or
	\item{$b$ is a discrete random parameter with the probability distribution: }
\[
\begin{array}{l}
P[b = 0.05] = 0.1\\
P[b = 0.06] = 0.2\\
P[b = 0.07] = 0.3\\
P[b = 0.075] = 0.4
\end{array}
\]
\end{itemize}

The probabilistic reachability property investigated in the experiment is: {\em what is 
the probability that muscle mass will decrease by 40\% within 25 days?}
Numerical values for all deterministic parameters in the model are presented 
in Table \ref{table:starvation-param} and verification results are featured in 
Table \ref{table:starvation}. Monte Carlo simulation of continuous nondeterminism 
in MATLAB was achieved as explained in Section \ref{sec:Exp}, via uniform discretisation 
(10 values) of the nondeterministic parameter $b$. In Figure \ref{fig:starvation-graph} we
plot the Monte Carlo reachability probability and confidence interval with respect to the 
value of parameter $b$.

\begin{table}[ht!] 
\caption{Starvation model parameters and initial conditions}
\centering
\begin{tabular}{c c c c c c c c c c}
\hline\hline 
Param. & Value & Param. & Value & Param. & Value & Param. & Value & Param. & Value\\ [0.5ex] 

\hline 
$a$ & 0.013 & $\lambda_{F}$ & 7777.8 & $M(0)$ & 43.6 & $V$ & 0.9 & $K(0)$ & 0.02\\ [0.5ex]
$C$ & 772.3 & $\lambda_{M}$ & 1400 & $F(0)$ & 25 & $L_{0}$ & 30.4 &  & \\ [0.5ex]
\hline \\ [0.5ex]
\end{tabular} 
\label{table:starvation-param} 

\end{table}

\begin{figure}[ht!] 
\centering
\includegraphics[width=120mm]{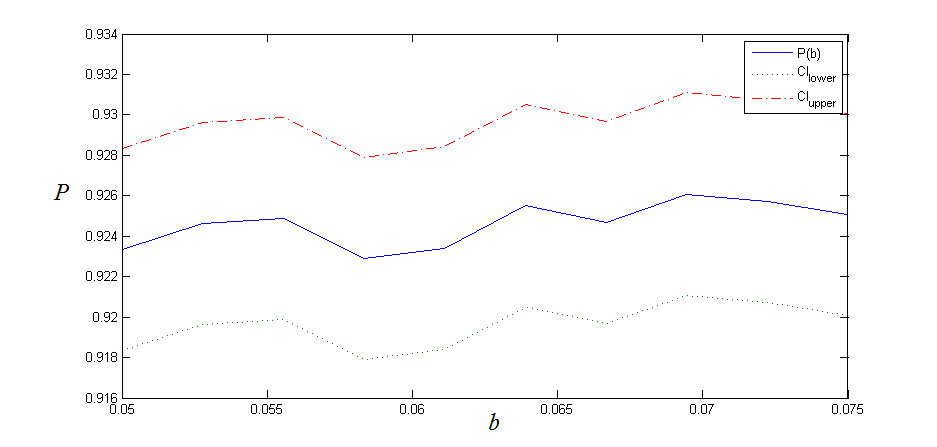}
\caption{Monte Carlo simulation of the starvation model: Reachability probability estimate $P$ 
(solid line) with respect to nondeterministic parameter $b\in [0.05, 0.075]$. For each 
(discretised) value of $b$  we give a Chernoff-Hoeffding confidence interval, denoted by dotted lines.}
\label{fig:starvation-graph}
\end{figure}

\subsection{Road scenario}

\begin{figure}[ht!] 
\centering
\includegraphics[width=120mm]{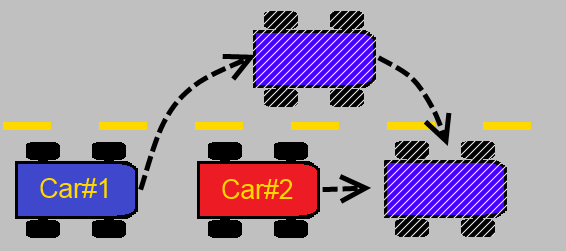}
\caption{Road scenario}
\label{fig:cars}
\end{figure}

We consider the road scenario inspired by a model presented in \cite{SMCnd} and depicted in 
Figure \ref{fig:cars}. Two cars ($Car\#1$ and $Car\#2$) move
on the same lane, starting at coordinates $S_{01} = 0$ and $S_{02} = S_{01} + \upsilon_{1} \cdot t_{safe}$,
where $t_{safe} = 2 sec$ implements the so-called ``two seconds rule'' for maintaining a safety 
distance between two cars.

We describe a car collision scenario and we model it with the hybrid automaton given in 
Figure \ref{fig:cars_model}. Starting in Mode 1 at time $t = 0$, $Car\#1$ changes lane and starts 
accelerating at $a_{a1}$ $m/s^{2}$, while $Car\#2$ is moving in the initial lane 
with speed $\upsilon_{2}$. Upon reaching the maximum speed $\upsilon_{max}$, the
system switches to Mode 2, where $Car\#1$ keeps moving at this speed until it gets ahead 
of $Car\#2$ by the safety distance $S_{safe} = \upsilon_{2} \cdot t_{safe}$.
After that, we switch to Mode 3: $Car\#1$ returns to the initial lane and starts decelerating 
at $a_{d1}$. For the driver of $Car\#2$ it takes $t_{react} = 1 sec$ to react
(the system switches to Mode 4) and then it starts decelerating as well (with random acceleration 
$a_{d2} \sim N(-1.35, 0.01)$). In Mode 3, 4, and 5 we also have an invariant specifying that $Car\#1$
should precede $Car\#2$ at all time. We calculate the {\em probability of observing a car collision 
in Mode 5}, where $Car\#1$ is stopped. Numerical values for all deterministic parameters in the model 
are given in Table \ref{table:cars-param} are verification results are presented in Table \ref{table:cars}.

\begin{table}[ht!] 
\caption{Car collision model parameters and initial conditions}
\centering
\begin{tabular}{c c c c c c c c c c}
\hline\hline 
Param. & Value & Param. & Value & Param. & Value & Param. & Value & Param. & Value\\ [0.5ex] 

\hline 
$\upsilon_{1}$ & 11.12 & $\upsilon_{2}$ & 11.12 & $\upsilon_{max}$ & 16.67 & $t_{safe}$ & 2 & $S_{safe}$ & $\upsilon_{2} \cdot t_{safe}$\\ [0.5ex]
$a_{a1}$ & 3 & $a_{d1}$ & -4 & $t_{react}$ & 1 & $S_{01}$ & 0 & $S_{02}$ & $S_{01} + \upsilon_{1} \cdot t_{safe}$ \\ [0.5ex]
\hline \\ [0.5ex]
\end{tabular} 
\label{table:cars-param} 

\end{table}

\begin{figure}[ht!] 
\centering
\includegraphics[width=120mm]{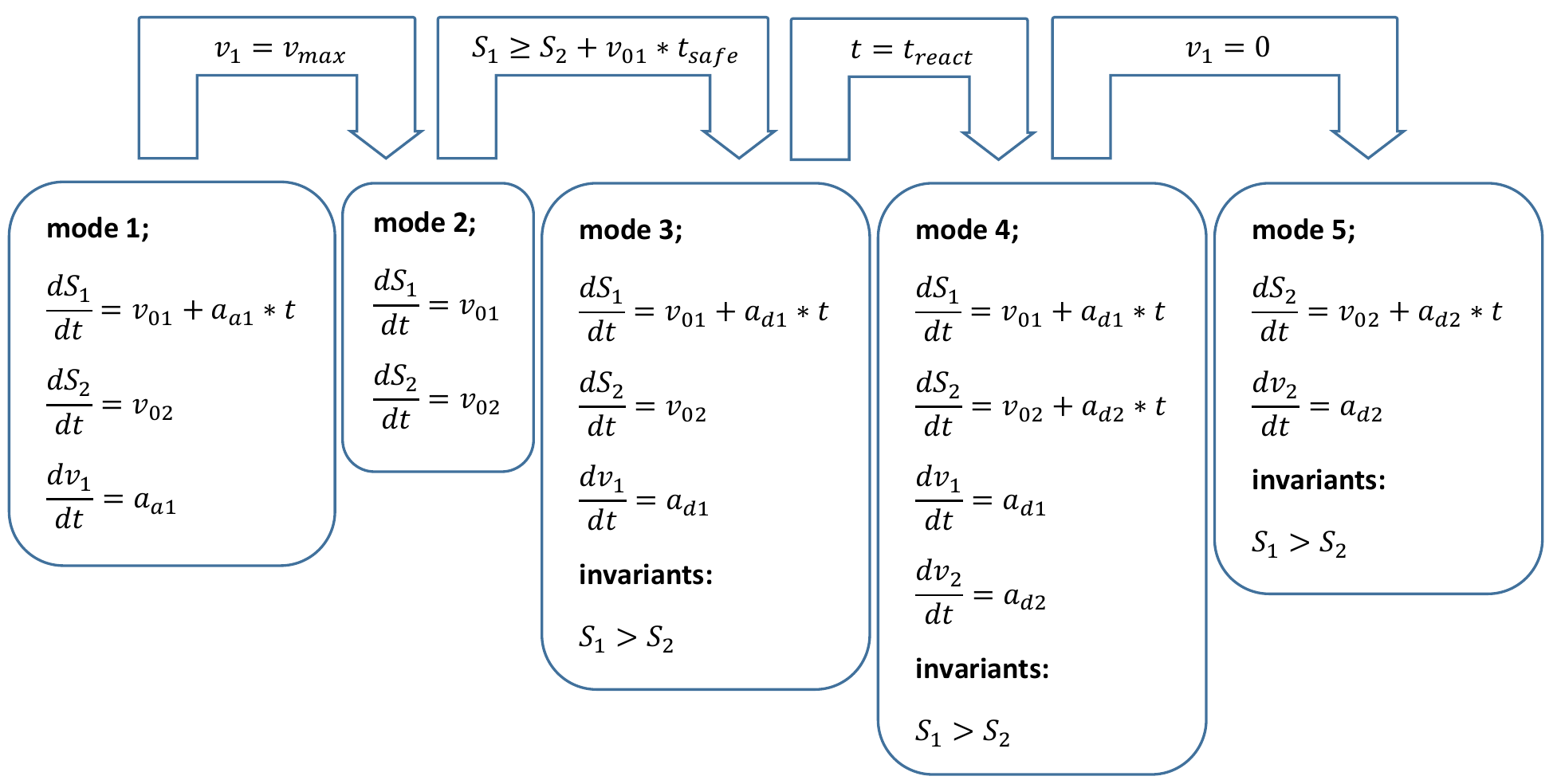}
\caption{Car collision model}
\label{fig:cars_model}
\end{figure}

\subsection{Prostate cancer therapy}
We consider a model of personalised prostate cancer therapy introduced by 
Ideta {\em et al.}~\cite{Ideta2008} and improved by Liu {\em et al.}~\cite{Liu2015}. 
Intermittent androgen suppression (IAS) has proved to be more effective than constant androgen 
suppression (CAS) in delaying the recurrence of prostate cancer. Briefly, the personalised 
therapy comprises of two repeating stages. The patient's prostate-specific antigen (PSA) level 
is monitored throughout the therapy. When the PSA level reaches an upper threshold, the patient 
starts receiving treatment ({\em on-therapy} stage) until the PSA level decreases to 
a lower threshold ({\em off-therapy}). The main aim of the therapy is to delay cancer relapse 
for as long as possible.

The model of the therapy is given in Figure \ref{fig:cancer_model} (a full explanation of the 
model and its parameters can be found in \cite{Liu2015}). Mode 1 is the {\em on-therapy} 
stage, and it continues until the PSA level (measured by $x+y$) is above threshold $r_{0} = 4$. 
Then the system makes a transition to the {\em off-therapy} mode which continues until PSA level 
is below $r_{1} = 10$. We explore the following scenarios:
\begin{itemize}
	\item{$\alpha_{y}$ is distributed normally ($N(0.05,0.01)$)} and $\alpha_{x}= 0.0197$
	\item{$\alpha_{y}$ is distributed normally ($N(0.05,0.01)$) and $\alpha_{x} \in [0.0197,0.0204]$ is nondeterministic}
\end{itemize}
For the cases above we calculate the probability of cancer relapse (\ie, $y\geq 1$) within 100 
days of using the personalised cancer therapy. Numerical values of all the parameters in the model 
are presented in the Table \ref{table:cancer-param} and verification results are featured in 
Table \ref{table:cancer}. Monte Carlo simulation of continuous nondeterminism in MATLAB 
was achieved as detailed in Section \ref{sec:Exp}, using uniform discretisation (20 values) of
the domain of the nondeterministic parameter $\alpha_{x}$. In Figure \ref{fig:cancer-graph} we
plot Monte Carlo reachability probability and confidence interval with respect to the value 
of parameter $\alpha_x$.

\begin{figure}[ht!] 
\centering
\includegraphics[width=120mm]{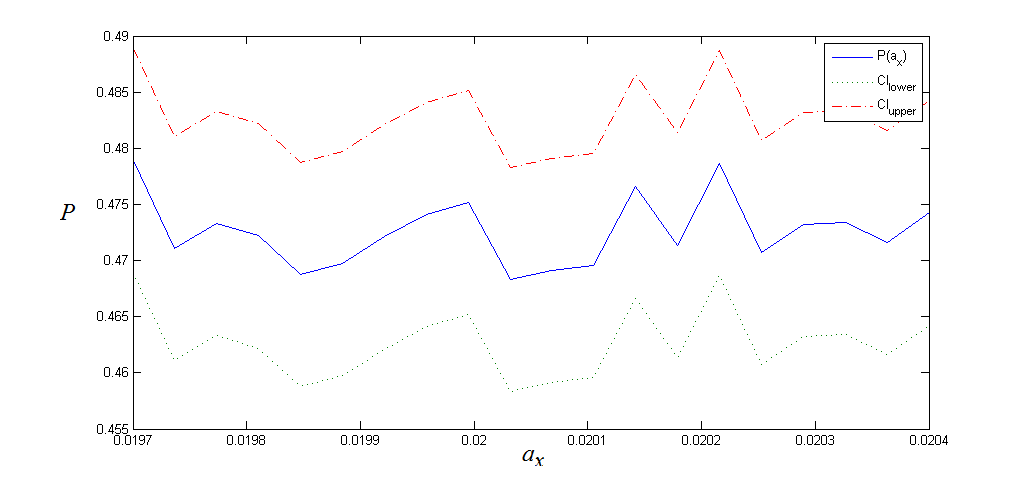}
\caption{Monte Carlo simulation of the prostate cancer therapy model: Reachability probability estimate
$P$ (solid line) with respect to the nondeterministic parameter $\alpha_x\in [0.0197,0.0204]$. 
For each (discretised) value of $\alpha_x$ we give a Chernoff-Hoeffding confidence interval, 
denoted by dotted lines.}
\label{fig:cancer-graph}
\end{figure}

\begin{table}[ht!] 
\caption{Prostate cancer therapy model parameters and initial conditions}
\centering
\begin{tabular}{c c c c c c c c c c}
\hline\hline 
Param. & Value & Param. & Value & Param. & Value & Param. & Value & Param. & Value\\ [0.5ex] 

\hline 
$\beta_{x}$ & 0.0175 & $\beta_{y}$ & 0.0168 & $k_1$ & 10.0 & $k_2$ & 1.0 & $k_3$ & 10.0\\ [0.5ex]
$k_4$ & 2 & $m_1$ & $10^{-5}$ & $z_0$ & 12 & $\gamma$ & 0.08 & $r_1$ & 10.0\\ [0.5ex]
$r_0$ & 4.0 & $d_0$ & $1.0$ & $c_1$ & 0.01 & $c_2$ & 0.03 & $c_3$ & 0.02\\ [0.5ex]
$x(0)$ & 19 & $y(0)$ & $0.1$ & $z(0)$ & 12.5 &  &  &  & \\ [0.5ex]
\hline \\ [0.5ex]
\end{tabular} 
\label{table:cancer-param} 
\end{table}

\begin{figure}[] 
\centering
\includegraphics[width=120mm]{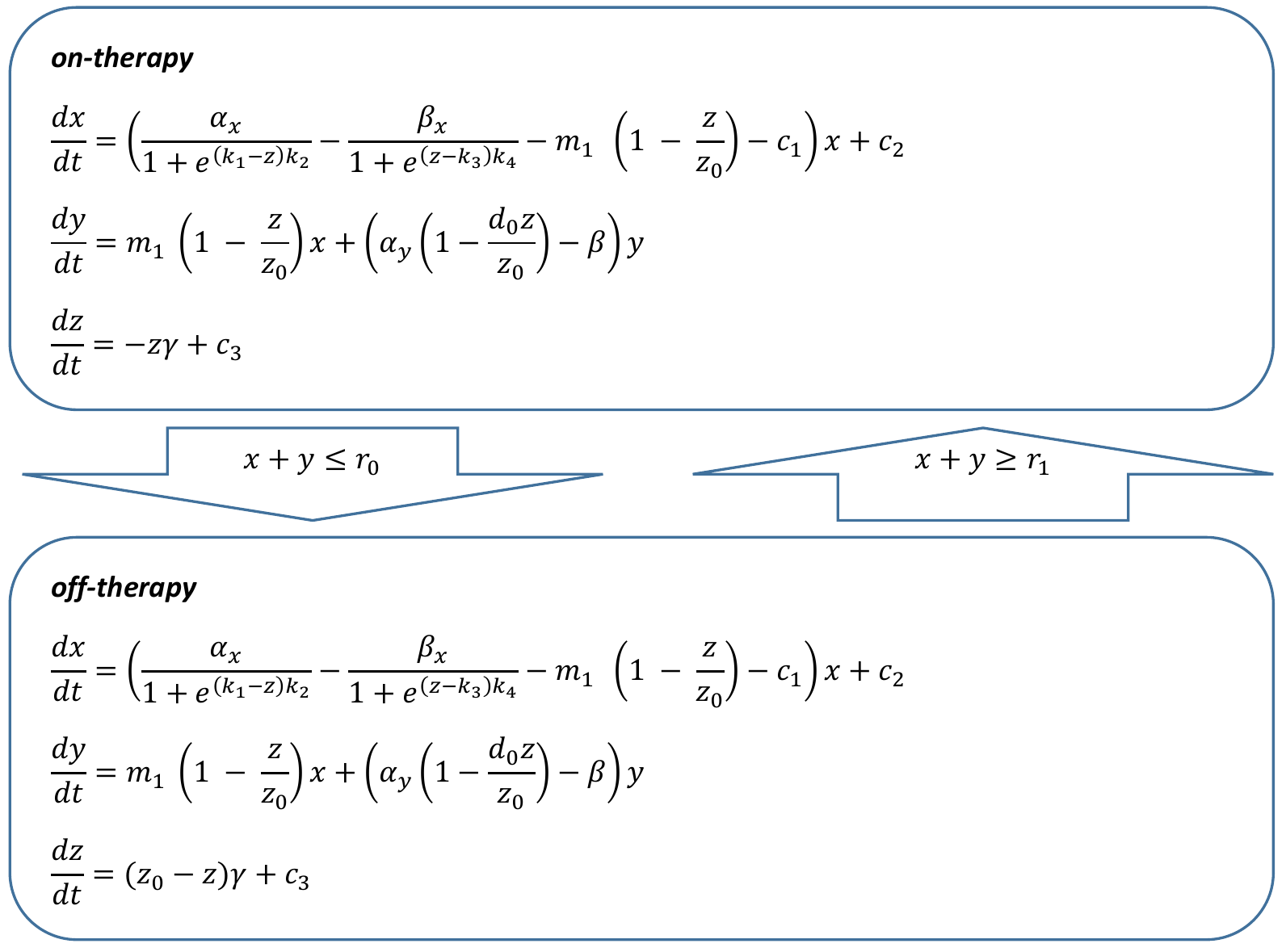}
\caption{Personalized prostate cancer therapy model}
\label{fig:cancer_model}
\end{figure}

\subsection{Prostate cancer therapy: \preach\ file}
\begingroup
\fontsize{7pt}{7pt}\selectfont
\begin{Verbatim}[frame=single, numbers=left, samepage = true]
// This is a pdrh file corresponding to the prostate cancer therapy model with 
// one random and one nondeterministic parameter.
MODEL_TYPE(NPHA) // defining model type
#define betax 0.0175
#define betay 0.0168
#define k1 10.0
#define k2 1.0
#define k3 10.0
#define k4 2
#define m1 0.00005
#define z0 12.0
#define gamma 0.08
#define r1 10.0
#define r0 4.0
#define d0 1.0
#define c1 0.01
#define c2 0.03
#define c3 0.02
#define Gx ((alphax/(1+exp((k1-z)*k2)))-(betax/(1+exp((z-k3)*k4))))
#define Gy ((alphay * (1 - (d0 * (z / z0)))) - betay)
#define Mxy (m1 * (1 - (z / z0)))
#define scale 1.0
#define T 100.0
N(0.05,0.01)alphay; // random parameter, normally distributed
[0,T]time;
[0,T]tau;
[0,100.0]x;
[0,10.0]y;
[0.0,100.0]z;
[0.0197,0.0204]alphax; // nondeterministic parameter
{ 
mode1; // on-therapy
	invt:
		(y <= 1);
	flow:
		d/dt[x]=scale * ((Gx - Mxy - c1) * x + c2);
		d/dt[y]=scale * (Mxy * x + Gy * y);
		d/dt[z]=scale * (-z * gamma + c3);
		d/dt[tau]=scale * 1.0;
	jump:
		((x+y)=r0)==>@2(and(tau'=tau)(x'=x)(y'=y)(z'=z));
}
{ 
mode2; // off-therapy
	invt:
		(y <= 1);
	flow:
		d/dt[x]=scale * ((Gx - Mxy - c1) * x + c2);
		d/dt[y]=scale * (Mxy * x + Gy * y);
		d/dt[z]=scale * ((z0 - z) * gamma + c3);
		d/dt[tau]=scale * 1.0;
	jump:
		((x+y)=r1)==>@1(and(tau'=tau)(x'=x)(y'=y)(z'=z));
}
init:
	@1(and (x = 19) (y = 0.1) (z = 12.5) (tau = 0));
goal: // unsafe region
	@2(and(y <= 1)(tau = T));
goal_c: // unsafe region complement
	@2(and(y > 1.0)(tau < T));
\end{Verbatim}
\endgroup

\hide{
\subsection{Insulin-glucose regulatory model}
We consider an insulin-glucose regulatory system for patients with type-1 diabetes. 
Our tool was applied to the insulin infusion model introduced in 
\cite{Sankaranarayanan:2012:SII:2415548.2415569} based on Hovorka {\em et al.}'s glucoregulatory 
model \cite{Hovorka02,Hovorka04}. The model consists of four subsystems: meal, insulin pump, 
glucoregulatory model and monitor (Figure \ref{fig:insulin-infusion-high-level}). 

\begin{figure}[ht!] 
\centering
\includegraphics[width=120mm]{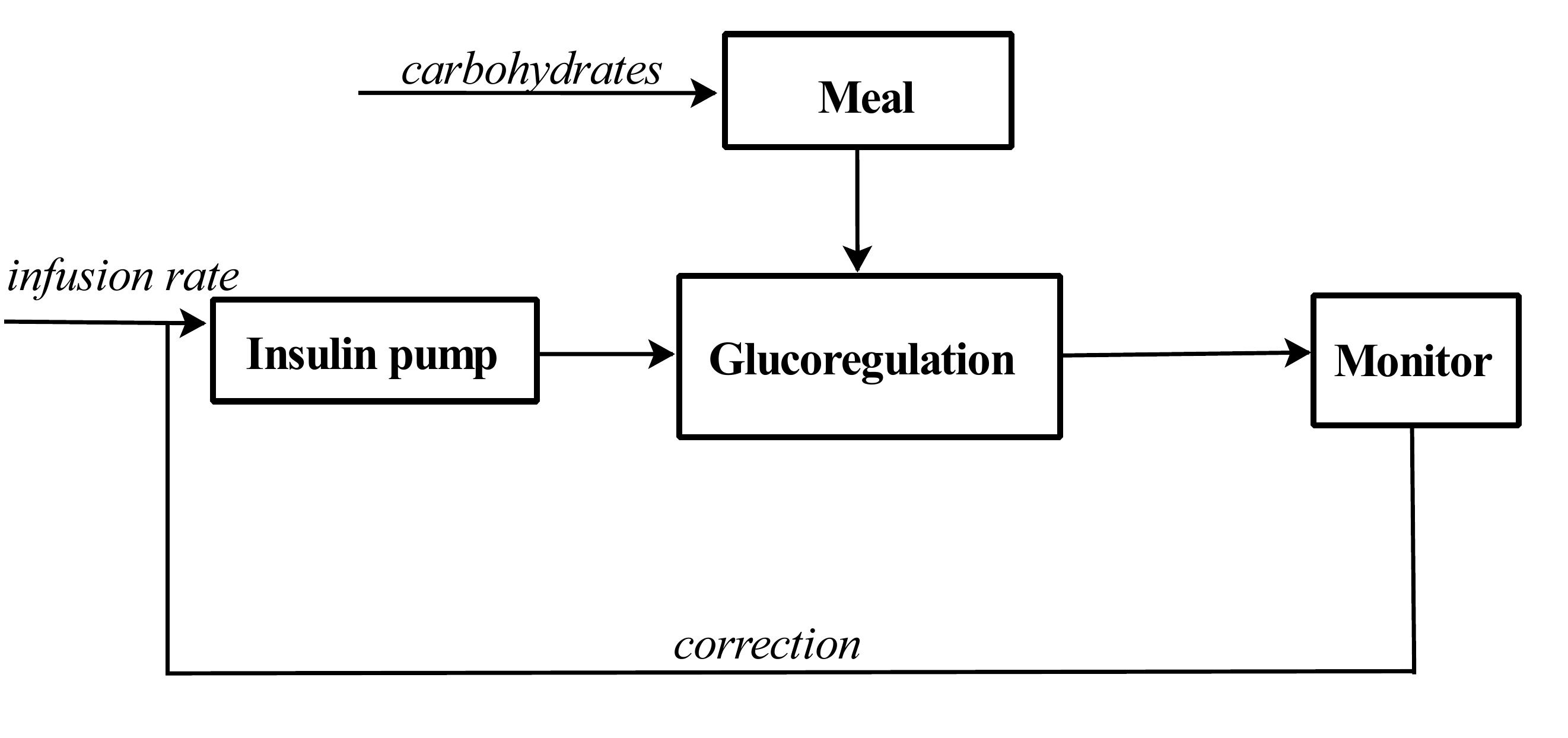}
\caption{Insulin-glucose regulatory high level model}
\label{fig:insulin-infusion-high-level}
\end{figure}

The meal parameters (\eg, amount of carbohydrates in the the meal and their glycemic index) 
are introduced into the meal subsystem by the patient prior to meal consumption. It models 
the food absorption by the human guts and outputs a glucose absorption rate. According to the meal
characteristics of the patient the system calculates and inputs the initial insulin infusion rate.
These two parameters are used by the glucoregulatory subsystem evaluating a glucose level of the
patient which is used as an input parameter for the monitor. When the glucose level goes outside
the predefined corridor the monitor sends a signal to the pump to increase or decrease the insulin
infusion rate. We consider a simplified version of the glucoregulatory system assuming that glucose 
absorption rate ($UG$) and insulin infusion rate ($u_0$) are given explicitly. The ODEs 
are presented in (\ref{eq:insulin-infusion-low-level}); the model parameters are given
in Table \ref{table:insulin-glucose-param}.

\begin{equation}\label{eq:insulin-infusion-low-level}
\begin{split} 
\frac{dQ_{1}}{dt} &= -F^{c}_{01} - x_{1}Q_{1} + k_{12}Q_{2}-F_{R}+EGP_{0}(1-x_{3})+ 0.18UG\\
\frac{Q_{2}}{dt} &=x_{1}Q_{1}-(k_{12} + x_{2})Q_{2}\\
\frac{dS_{1}}{dt} &=u - \frac{S_{1}}{t_{maxI}}\\
\frac{dS_{2}}{dt} &=\frac{S_{1}-S_{2}}{t_{maxI}}\\
\frac{dI}{dt} &=\frac{S_{2}}{t_{maxI}V_{I}}-k_{e}I\\
\frac{dx_{1}}{dt} &=-k_{a1}x_{1}+k_{b1}I\\
\frac{dx_{2}}{dt} &=-k_{a2}x_{2}+k_{b2}I\\
\frac{dx_{3}}{dt} &=-k_{a3}x_{3}+k_{b3}I\\
F_{01}^{c} &=\frac{F_{01}G}{0.85(G+1)}\\
G &=\frac{Q_{1}}{V_{G}}
\end{split}
\end{equation}

We consider the following scenario. Initially a meal is consumed ($UG = 8$) and insulin pump does not infuse any insulin ($u_{0}=0$). The glucose level starts rising and when it reaches the point of $G = 10$ the monitor sends a signal to the pump to increase the infusion rate ($u = 0.36$). Randomising $x_{3}(0)$ normally ($\mu = 0.05$ and $\sigma = 0.005$) we want to calculate the probability that the glucose level returns back to normal within 60 minutes after the pump started infusion. The results obtained using our tool are given in Table \ref{table:insulin-glucose}.

\begin{table}[ht!] 
\caption{Insulin-glucose regulatory model parameters and initial conditions}
\centering
\begin{tabular}{c c c c c c c c c c}
\hline\hline 
Param. & Value & Param. & Value & Param. & Value & Param. & Value & Param. & Value\\ [0.5ex] 
\hline 
$Q_{1}(0)$ & 64.0 & $S_{1}(0)$ & 4.2 & $I(0)$ & 0.03 & $x_{2}(0)$ & 0.045 & $V_{I}$ & $0.12w$\\ [0.5ex]
$Q_{2}(0)$ & 40.0 & $S_{2}(0)$ & 4.0 & $x_{1}(0)$ & 0.03 & $w$ & 100 & $V_{G}$ & $0.16w$\\ [0.5ex]
$k_{a1}$ & 0.006 & $k_{a2}$ & 0.06 & $k_{a3}$ & 0.03 & $k_{b1}$ & 0.0034 & $k_{b2}$ & 0.056\\ [0.5ex]
$k_{b3}$ & 0.024 & $k_{e}$ & 0.138 & $k_{12}$ & 0.066 & $t_{max,I}$ & 55 & $F_{01}$ & $0.0097w$\\ [0.5ex]
$F_{R}$ & 0.0 & $EGP_{0}$ & $0.0161w$ & $u_{0}$ & 0.0 & $u$ & 0.36 & $UG$ & 8\\ [0.5ex]
\hline \\ [0.5ex]
\end{tabular} 
\label{table:insulin-glucose-param} 

\end{table}

The obtained result was validated using Monte Carlo simulation in MATLAB 
(Table \ref{table:insulin-glucose}). The time variable was discretised 
over the interval $[0, 200]$, and the system was simulated using random 
samples distributed normally with mean $\mu = 0.05$ and standard deviation 
$\sigma = 0.005$. The number of samples was calculated using 
the Chernoff-Hoeffding bound with $\zeta = 0.0025$ and $c = 0.99$,
which returned $N = 92,104$.

}

\section{$\delta$-satisfiability}\label{apndx:delta-sat}

In order to overcome the undecidability of reasoning about general real formulae, 
Gao {\em et al.} recently
defined the concept of $\delta$-satisfiability over the reals \cite{DBLP:conf/lics/GaoAC12}, 
and presented a corresponding $\delta$-complete decision procedure. 
The main idea is to decide correctly whether slightly {\em relaxed} sentences over the reals 
are satisfiable or not. The following definitions are from \cite{DBLP:conf/lics/GaoAC12}.
\begin{definition}
A bounded quantifier is one of the following:
\[ \begin{array}{ll}
	\exists^{[a, b]} x  = \exists x : (a \le x \wedge x \le b) \\[1ex]
	\forall^{[a, b]} x  = \forall x : (a \le x \wedge x \le b)
\end{array} \]
\end{definition}

\begin{definition}
A bounded $\Sigma_1$ sentence is an expression of the form:
\begin{equation*} 
	\exists^{I_{1}} x_{1}, ..., \exists^{I_{1}} x_{n}: \psi(x_{1}, ..., x_{n})
\end{equation*}
where $I_{i} = [a_{i}, b_{i}]$ are intervals, $\psi(x_{1}, ..., x_{n})$ is a Boolean combination
of atomic formulas of the form $g(x_{1}, ..., x_{n})\, \texttt{op} \, 0$, where $g$ is a composition
of Type 2-computable functions and $\texttt{op} \in \{<, \le, >, \ge, =, \ne\}$.
\end{definition}
Any bounded $\Sigma_1$ sentence is equivalent to a $\Sigma_1$ sentence in which all the atoms are
of the form $f(x_{1}, ..., x_{n}) = 0$ (\ie, the only {\em \texttt{op}} needed is `=')
\cite{DBLP:conf/lics/GaoAC12}.
Essentially, Type 2-computable functions can be approximated arbitrarily well by finite
computations of a special kind of Turing machines (Type 2 machines); most of the `useful' functions 
over the reals (\eg, continuous functions) are Type 2-computable \cite{kobook}.

The notion of $\delta$-weakening \cite{DBLP:conf/lics/GaoAC12} of a bounded sentence is central
to $\delta$-satisfiability.
\begin{definition}
Let $\delta \in \mathbb{Q^+} \cup \{0\}$ be a constant and $\phi$ a bounded $\Sigma_1$-sentence 
in the standard form 
\begin{equation} \label{def:sigma1}
\phi = \exists^{I_{1}} x_{1}, ..., \exists^{I_{n}} x_{n}: \bigwedge^{m}_{i = 1}(\bigvee^{k_i}_{j = 1} f_{ij}(x_{1}, ..., x_{n}) = 0)
\end{equation}
where $f_{ij}(x_{1}, ..., x_{n}) = 0$ are atomic formulas. The $\delta$-{\em weakening} of $\phi$ 
is the formula:
\begin{equation*}
\phi^\delta = \exists^{I_{1}} x_{1}, ..., \exists^{I_{n}} x_{n}: \bigwedge^{m}_{i = 1}(\bigvee^{k_i}_{j = 1} |f_{ij}(x_1, ..., x_n)| \le \delta)
\end{equation*}
\end{definition}
Note that $\phi$ implies $\phi^\delta$, while the converse is obviously not true.
The bounded $\delta$-satisfiability problem asks for the following: given a sentence of 
the form (\ref{def:sigma1}) and $\delta \in \mathbb{Q^+}$, correctly decide between
\begin{itemize}
	\item \textbf{unsat}: $\phi$ is false,
	\item \textbf{$\delta$-true}: $\phi^\delta$ is true.
\end{itemize}
If the two cases overlap (\ie, $\phi$ is both false and $\delta$-satisfiable) then either decision 
can be returned, thereby causing a {\em false alarm}. Such a scenario reveals that the 
formula is {\em fragile} --- a small perturbation (\ie, a small $\delta$) can change the formula's
truth value. 
The dReal tool \cite{DBLP:conf/cade/GaoKC13} implements an algorithm for solving 
the $\delta$-satisfiability problem, \ie, a $\delta$-complete decision procedure. Basically, 
the algorithm combines a DPLL procedure (for handling the Boolean parts of the formula) with 
interval constraint propagation (for handling the real arithmetic atoms).

\end{document}